\newcommand{\bigOh}{\mathop{\mathcal{O}}\nolimits}  
\newcommand{\nth}[1]{\text{${#1}$-th}}              
\newcommand{\f}[1]{\text{\texttt{#1}}\xspace}        
\newcommand{\M}{\ensuremath{M}\xspace}               
\newcommand{\m}{\ensuremath{m}\xspace}               
\newcommand{\free}{\ensuremath{\varepsilon}\xspace}  
\newcommand{\del}{\ensuremath{\delta}\xspace}        
\newcommand{\used}{\ensuremath{\nu}\xspace}        
\newcommand{\N}{\ensuremath{N\xspace}}               
\newcommand{\lastfree}{\texttt{last\_free}\xspace}   
\newcommand{\T}{\ensuremath{\mathcal{T}}\xspace}     
\newcommand{\call}[2]{\textsc{#1}({#2})}             
\newcommand{\nil}{\textbf{nil}\xspace}               
\newcommand{\pair}[2]{\ensuremath{\langle #1, #2\rangle}}
\newcommand{\cpair}[2]{\ensuremath{#1,\ #2}}
\newcommand{\lua}{Lua\xspace} 
\newcommand{\Lua}{Lua\xspace} 
\newcommand{\bin}{\texttt{Bin}}
\newcommand{\reviewer}[1]{ 
\relax
}
\lstdefinestyle{myLuastyle}
{
  language         = {[5.2]Lua},
  basicstyle       = \ttfamily,
  showstringspaces = false,
  upquote          = true,
  keywordstyle=\bfseries\color{blue}
}
\newtheorem{theorem}{Theorem}
\newtheorem{lemma}[theorem]{Lemma}
\newtheorem{proposition}[theorem]{Proposition}
\newtheorem{corollary}[theorem]{Corollary}
\newtheorem{remark}{Remark}
\newtheorem{example}{Example}
\begin{document}

\title[Analysis of Lua Hybrid Tables]{Mathematical Models to Analyze \Lua Hybrid Tables and Why They Need a Fix}

\author[1]{\fnm{Conrado} \sur{Martínez}\orcidlink{0000-0003-1302-9067}}\email{conrado@cs.upc.edu}

\author[2]{\fnm{Cyril} \sur{Nicaud}\orcidlink{0000-0002-8770-0119}}\email{cyril.nicaud@univ-eiffel.fr}

\author*[2]{\fnm{Pablo} \sur{Rotondo}\orcidlink{0000-0001-8777-1278}}\email{pablo.rotondo@univ-eiffel.fr}

\affil*[1]{\orgdiv{Dept.\ of Computer Science}, \orgname{Universitat Politècnica de Catalunya}, \orgaddress{\street{Campus Nord, Omega--241, Jordi Girona, 1--3}, \city{Barcelona}, \postcode{08034}, \country{Spain}}}

\affil[2]{\orgdiv{Laboratoire d'Informatique Gaspard-Monge (LIGM) UMR 8049}, \orgname{Université Gustave Eiffel}, \orgaddress{\street{Bâtiment Copernic, 5, boulevard Descartes, Cité Descartes}, \city{Champs-sur-Marne}, \postcode{77454}, \country{France}}}

\abstract{\Lua (Ierusalimschy \emph{et al.}, 1996) 
is a well-known scripting language, 
popular among many programmers, most notably in the gaming industry. Remarkably,  the only data-structuring mechanism in \lua are associative arrays, called  \emph{tables}. With \lua 5.0, the reference implementation of \lua introduced \emph{hybrid tables} to implement tables using both a hashmap and a dynamically growing array combined together: the values associated with integer keys are stored in the array part, when suitable, everything else is stored in the hashmap. All this is transparent to the user, who gets a unique simple interface to handle tables. In this paper we carry out a theoretical analysis of the performance of \lua's tables, by considering various worst-case and probabilistic scenarios. In particular, we uncover
some problematic situations for the simple probabilistic model where we add a new key with some fixed probability $p>\frac12$ and delete a key with probability $1-p$: the cost of performing $T$ such operations is proved to be $\Omega(T\log T)$ with high probability, where linear complexity is expected instead.}

\keywords{Algorithm engineering, Data structures, Probabilistic analysis of algorithms, Hash  tables, Lua}

\maketitle

\section{Introduction}
\label{sec:introduction}
When implementing the standard algorithms and data structures of a new programming language, engineers usually follow the classical solutions that have been validated by both practice and theory. Sometimes, however, they innovate and propose new ideas that fit best with the internal implementation of the language or that behave better with the typical data of their intended audience. For example, this was the case for the sorting algorithm \textsc{TimSort} used in the main implementation of \textsc{Python}.\footnote{\url{https://github.com/python/cpython/blob/main/Objects/listsort.txt}} This new, elegant and powerful sorting algorithm was quickly adopted by several other languages, including \textsc{Java}.\footnote{\url{https://docs.oracle.com/javase/7/docs/api/java/util/Arrays.html}} 
After a decade of existence, computer scientists started analyzing its efficiency,  helped to fix some issues and confirmed its excellent performances at a theoretical level~\cite{AugerJNP18,BussK19,GoRoBoBuHa15}. 
The actual principal implementation of the language \lua revisits the way maps, i.e., associative arrays, are structured internally, in a new and innovative structure named \emph{table}.  Studying this novel way of structuring data from a theoretical point of view is the main purpose of this article.

\Lua\footnote{From the Portuguese ``\emph{lua}'', meaning moon.}
is a scripting programming language~\cite{IFF96} created in the early nineties and adopted by many programmers, especially for the development of gaming applications;  it  keeps a base of tens of thousands of users worldwide. Like many scripting languages, \lua is characterized by its simplicity and extensibility, with the aim to help integrate code written in different programming languages.

The only data-structuring mechanism in \lua are so-called tables: this was a design decision which made the language simple, yet flexible and powerful. 
If $H$ is a \lua table then the assignment
\texttt{H[x] = y}
associates the value $y$ to the key $x$, 
for whatever $x$ and $y$, and regardless of their types. 
If $x$ was already a key in $H$, then the 
value associated to $H$ is updated and changed
to $y$. If $x$ is not present, then the instruction inserts the pair $\pair{x}{y}$ in the table $H$. The expression $H[x]$ actually returns a reference to the place where the value associated to $x$ is stored or the special
value \nil if the key is not present in the table; the reference can be then used to obtain the sought value or to assign a new value. To ``delete'' a pair $\pair{x}{y}$ from $H$ it is enough to assign \nil to $H[x]$. Everything is transparent to the user, including how the table grows to accommodate more and more pairs, or how unused space is restored back to the memory heap for future use.

Until \lua 4.0, tables were implemented strictly as hashmaps: all pairs $\pair{x}{y}$ were explicitly stored in a single 
hashmap, irrespective of the type of the keys $x$.
\Lua 5.0 brought on a new implementation of the tables in order to optimize their use as arrays: pairs with integer keys are stored in a separate actual array, without storing their keys,
provided that the index (the integer key) falls within the 
current range $[1,\ldots,n]$ of the array~\cite{IFF05}. The value $n$ changes dynamically so that the array always contains $> n/2$ non-nil values. All other pairs, when the key is not an integer, or it is outside the current range of the array, are stored in the hashmap as usual. The new \lua (hybrid) tables thus have two parts, called the hash-part or \emph{hashmap},
and the array-part or \emph{array}.

\noindent\textbf{Main contributions \& Plan of the paper.} In this paper, we provide a theoretical 
analysis covering some aspects of the performance of \lua hybrid tables.\footnote{All detailed descriptions in the remaining of the paper and our analysis refer to version 5.4.4 of \Lua (the most recent).} We first consider the implementation of hashmaps (Section~\ref{sec:hashmaps}). It is quite direct to establish that, in presence of both insertions and deletions, \Lua's hashmap is suboptimal in the worst case (amortized). This is due to incorrect resizing parameter settings of the hashmap, which can be exploited to construct a sequence of insertions and deletions that frequently trigger a recalculation of the entire hashmap. 

However, this worst case instance is very specific, and one can legitimately argue that it is very unlikely to happen in practice. Our main contribution is to propose a simple yet reasonable probabilistic scenario for insertions and deletions in a data structure, then establish that under this model, an operation in a \Lua hashmap takes $\Omega(\log n)$ time. This is done in Section~\ref{subsec:analysis-hashmaps-avg-case}, and this strongly advocates for changes in the implementations, which are discussed in the conclusion.

After that, in Section~\ref{sec:hybrid-tables}, we investigate the performance of the hybrid table as a whole, this time focusing on sequences of insertions involving integer keys, which should exploit, as much as possible, 
the array component of \lua's hybrid tables. We show first that a carefully crafted sequence of $n$ insertions will require super-linear cost (Example~\ref{example:hybrid-1} and Proposition~\ref{worst-case-permutations-1}). We also show that less adverse scenarios, in particular, some that may arise naturally in practice, will need expected constant amortized time per insertion, however the array part will be empty most of the time (Lemma~\ref{lemma:half-set-permutation-2nd-moment}), and thus the advantages of the hybrid scheme can become blurred (Theorem~\ref{thm:average-case-permutations2}). 

Finally, in the conclusions (Section~\ref{sec:conclusions}), we  present some striking experiments, comparing the performance of \Lua with a \Lua modified according to a classical solution to avoid the drawback of the hybrid table's implementation. 

A preliminary version of this paper was presented in a conference extended abstract~\cite{MNR22}.

\section{Hashmaps in \Lua}
\label{sec:hashmaps}

\subsection{Description of the hashmap algorithms}
\label{subsec:hashmaps-algorithms}
In this section, we give a precise description of the hashmap algorithms in \Lua, corresponding to the version~5.4.\footnote{The source code can be found in the file \url{https://www.lua.org/source/5.4/ltable.c.html}}  

When non-empty, a hashmap in \lua consists of an array of size $\M=2^\m$ for some $\m\ge 0$. Each slot contains a \emph{key}, a \emph{value} and the index \emph{next} of the next location to probe in the search sequence if the sough key is not the one at that slot (\f{next} is \nil if there is no successor). When both the key and the value are \nil, the slot is empty; for slots that have been deleted the value has been set to \nil, but the key is retained; the slots that contain the actual elements of the table have both their keys and their value fields non-\nil. 
Throughout the article, a slot is said to be \emph{used}, \emph{deleted} or \emph{free} when it contains a pair $\pair{\text{key}}{\text{value}}$, 
a pair $\pair{\text{key}}{\nil}$ or a pair $\pair{\nil}{\nil}$, respectively.

In addition to the array, the data structure also keeps an index \lastfree pointing to the first slot that must be checked when looking for a free slot in a downwards scanning of the array.
Initially, we set $\lastfree$ to $\M-1$, and it can only decrease.

\noindent\textbf{Search.} The search for a key $x$ simply consists
in computing its hash value then following the \f{next} links until we find the key $x$ and return the associated value (success if it is not \nil) or the end of the list (failure). Notice that returning the value of the last probed slot, we will return \nil  whenever $x$ is not in the table (either because $x$ was deleted or because we reach a free slot).

\noindent\textbf{Deletion.} To delete a key $x$, we search for it. The associated value of the slot where the search ends is set to \nil (it might already be \nil if $x$ was not in the table). 
The \f{next} field remains unaltered to maintain the chaining.

\noindent\textbf{Insertion.}
If one wants to set $x\mapsto y$ and the key $x$ is already in the table, even with a deleted status, we just update its associated value to $y$. If $x$ is not already there, let $i$ be the position corresponding to the hash value of $x$ (the \emph{main position} of $x$, in \lua parlance). If the slot $i$ is free or deleted, the key $x$ and its value $y$ are put there, with a \nil \f{next}-link when the slot is free and keeping the \f{next}-link value if it is deleted.
Otherwise, there is a collision with another pair $\pair{x'}{y'}$ ($y'\not=\nil$) at the position $i$. There are two distinct cases when it happens. If $x'$ is at its main position, a free slot is found for $\pair{x}{y}$ and the chaining is updated so that $\pair{x}{y}$ is on the second position of the linked list starting at index $i$. This is easily done by updating the two \f{next}-links at index $i$ and at the free slot.
If $x'$ is not at its main position, it is moved to a free slot, which requires a list scan starting from the hash value of $x'$ to find its predecessor in its chaining. Then $\pair{x}{y}$ is set at index $i$, with a \nil \f{next}-link, starting a new chain within the table.

Looking for a free slot is accomplished by scanning the table from right to left, starting at position \lastfree, until a free slot is found. Importantly, \emph{deleted slots are ignored during this process} in \lua, to avoid problems with the chaining. If no free slot is found, i.e., \lastfree exits the left boundary of the array, then a rehash occurs: the number of used keys $n$ is determined, then a new hashmap of size $2^m$ is allocated, where $m$ is the smallest integer such that $n+1\leq 2^m$. All pairs $\pair{\text{keys}}{\text{value}}$ and the pair $\pair{x}{y}$ are then inserted into the new map.

So in any case, just after its insertion,  $\pair{x}{y}$ is at the first or second place in its chain. The algorithms are depicted in Fig.~\ref{algo:hashmap} and examples of insertions in a \lua hashmap are depicted in Fig.~\ref{fig:insertions}.

\begin{figure}[H]
\begin{algorithm}[H]\small
\DontPrintSemicolon
  $i\gets \call{Hash}{x}$\;
  \lIf{$\T[i].\f{value}=\nil$}{
  	$(\T[i].\f{key},\ \T[i].\f{value})\gets (\cpair{x}{y})$
  }
  \Else{
  	$f\gets \call{GetFreePos}{\T}$\;
	\lIf{$f=\nil$}{
		$\T\gets\call{ReHash}{\T,x,y}$
	}
	\Else{
		$j\gets \call{Hash}{\T[i].\f{key}}$\;
		\If{$i=j$}{
			$(\T[f].\f{key},\ \T[f].\f{value})\gets (\cpair{x}{y})$\;
			$\T[f].\f{next} \gets \T[i].\f{next}$\;
			$\T[i].\f{next} \gets f$\;
		}
		\Else{
			$p \gets \call{Predecessor}{\T,i}$\;
			$\T[f]\gets \T[i]$\;
			$\T[p].\f{next}\gets f$\;
			$(\T[i].\f{key},\ \T[i].\f{value},\ \T[i].\f{next})\gets (\cpair{x}{y},\ \nil)$\;
		}
	}
  }
  \caption{ $\call{Insertion}{\T,x,y}$} 
  \label{algo:insertion}
\end{algorithm}
\begin{algorithm}[H]
\small
\DontPrintSemicolon
  $n,\ M \gets 0,\ 1$\;
  \For{$i$ index of $\T$}{
  	\lIf{$\T[i]$.\f{value}$\neq$\,\nil}{
		$n\gets n+1$
	}
  }
  \lWhile{$\M<n$}{$\M\gets 2\M$}
   $\T'\gets\call{Allocate}{\M}$\;
   \For{$i$ index of $\T$}{
  	\lIf{$\T[i]$.\f{value} $\neq$ \nil}{
		$\call{Insertion}{\T',\T[i].\f{key},\T[i].\f{value}}$
	}
  }
  $\call{Insertion}{\T',x,y}$\;
  \Return $\T'$\;
\caption{ $\call{ReHash}{\T,x,y}$} 
   \label{algo:rehash}
\end{algorithm}
\begin{algorithm}[H]
\small
\DontPrintSemicolon
  \While{$\T.\lastfree\geq 0$ and $\T[\T.\lastfree].\f{key}\neq \nil$}{
  	$\T.\lastfree\gets\T.\lastfree-1$
  }
  \lIf{\T.\lastfree$=-1$}{
  	\Return \nil
  }
  \lElse{
  	\Return \T.\lastfree
  }
  \caption{ $\call{GetFreePos}{\T}$}
  \label{algo:getfreepos}
\end{algorithm}
\caption{Algorithm for the insertion of a pair key/value $\pair{x}{y}$ in a \lua hashmap (instruction 
$\text{\texttt{\T\!\!\![x]=y}}$ in \lua), when $x$ is not already a key in $\T$. Predecessor just iterates through the chaining of $\T[i]$ to find its predecessor.\label{algo:hashmap}} 
\end{figure}

\noindent
\begin{minipage}[t]{.48\textwidth}
    \vspace{0pt}
	\begin{tikzpicture}[xscale=.7,yscale=1]
		\path (0,0) -- (0,-9.9);
	
	\node (i) at (0,0) {};
	\draw[thick] (i) rectangle ++ (8,1);
	\foreach \i in {1,2,...,8} {
		\draw[thick] ($(i) + (\i,0)$) -- ++ (0,1);
	}
	\foreach \i in {0,1,...,7} {
		\node at ($(i) + (\i+.5,1.3)$) {\footnotesize $\i$};
	}
	\foreach \i/\k/\v/\n in {2/k_1/v_1/n1, 7/k_2/v_2/n2, 5/k_3/v_3/n3,6/\ell/w/m1} {
		\node at ($(i)+(\i+.5,.75)$) {\small $\k$};
		\node at ($(i)+(\i+.5,.4)$) {\small $\v$};
		\node[inner sep=3pt,minimum size=0pt]  (\n) at ($(i)+(\i+.5,.12)$) {$\bullet$};
	}
	\draw (n1) edge[->,bend right = 40] (n3);
	
	\draw (n3) edge[->,bend right = 50] (n2);
	\draw[->,thick] ($(i)+(5.5,1.8)$) -- node[right]{\scriptsize\tt last\_free}++ (0,-.3);
	
	\draw[blue,->,very thick] (.4,-1) -- node[black,right,align=left]{\footnotesize insert key $k_4$ (hash=2) with value $v_4$\\\footnotesize collision with $k_1$, in its main position} ++ (0,-1.4);
	
	\node (i) at (0,-4.5) {};
	\draw[thick] (i) rectangle ++ (8,1);
	\foreach \i in {1,2,...,8} {
		\draw[thick] ($(i) + (\i,0)$) -- ++ (0,1);
	}
	\foreach \i in {0,1,...,7} {
		\node at ($(i) + (\i+.5,1.3)$) {\footnotesize $\i$};
	}
	\foreach \i/\k/\v/\n/\c in {2/k_1/v_1/n1/black, 7/k_2/v_2/n2/black, 5/k_3/v_3/n3/black,4/k_4/v_4/n4/blue,6/\ell/w/m1/black} {
		\node[\c] at ($(i)+(\i+.5,.75)$) {\small $\k$};
		\node[\c] at ($(i)+(\i+.5,.4)$) {\small $\v$};
		\node[\c,inner sep=3pt,minimum size=0pt]  (\n) at ($(i)+(\i+.5,.12)$) {$\bullet$};
	}
	\draw (n1) edge[blue,->,thick,bend right = 50] (n4);
	\draw (n4) edge[blue,->,thick,bend right = 80] (n3);
	\draw (n1) edge[red,->,thick,bend right = 43,dotted] (n3);
	
	\draw (n3) edge[->,bend right = 50] (n2);
	\draw[blue,->,thick] ($(i)+(4.5,1.8)$) -- node[right]{\scriptsize\tt last\_free}++ (0,-.3);
	
	\draw[blue,->,very thick] (.4,-5.5) -- node[black,right,align=left]{\footnotesize insert key $x$ (hash=4) with value $y$\\\footnotesize collision with $k_4$, not in  main position} ++ (0,-1.4);
	
	\node (i) at (0,-9) {};
	\draw[thick] (i) rectangle ++ (8,1);
	\foreach \i in {1,2,...,8} {
		\draw[thick] ($(i) + (\i,0)$) -- ++ (0,1);
	}
	\foreach \i in {0,1,...,7} {
		\node at ($(i) + (\i+.5,1.3)$) {\footnotesize $\i$};
	}
	\foreach \i/\k/\v/\n/\c in {2/k_1/v_1/n1/black, 7/k_2/v_2/n2/black, 5/k_3/v_3/n3/black,3/k_4/v_4/n4/blue,6/\ell/w/m1/black, 4/x/y/nx/blue} {
		\node[\c] at ($(i)+(\i+.5,.75)$) {\small $\k$};
		\node[\c] at ($(i)+(\i+.5,.4)$) {\small $\v$};
		\node[\c,inner sep=3pt,minimum size=0pt]  (\n) at ($(i)+(\i+.5,.12)$) {$\bullet$};
	}
	\draw (n1) edge[blue,->,thick,bend right = 50] (n4);
	\draw (n4) edge[->,bend right = 50] (n3);
	\draw (n1) edge[red,->,thick,bend right = 43,dotted] (nx);
	
	\draw (n3) edge[->,bend right = 50] (n2);
	\draw[blue,->,thick] ($(i)+(3.5,1.8)$) -- node[right]{\scriptsize\tt last\_free}++ (0,-.3);

	\end{tikzpicture}
	\captionof{figure}{If there is no deleted value and if there is a collision with a key that is in its main position, the new inserted element is placed in a free spot and is at the second position in its chaining. If the colliding key is not at its main position, the new element is put there, and the colliding element is placed in the free spot.}
	\label{fig:insertions}
\end{minipage}
\hfill
\begin{minipage}[t]{.48\textwidth}
    \vspace{0pt}
	\begin{tikzpicture}[xscale=.7,yscale=1]
	\path (0,0) -- (0,-9.9);
	\node (i) at (0,0) {};
	\draw[thick] (i) rectangle ++ (8,1);
	\foreach \i in {1,2,...,8} {
		\draw[thick] ($(i) + (\i,0)$) -- ++ (0,1);
	}
	\foreach \i in {0,1,...,7} {
		\node at ($(i) + (\i+.5,1.3)$) {\footnotesize $\i$};
	}
	\foreach \i/\k/\v/\n in {2/k_1/v_1/n1, 7/k_2/\--/n2, 5/k_3/\--/n3,6/\ell/w/m1/,4/z/\--/nz} {
		\node at ($(i)+(\i+.5,.75)$) {$\k$};
		\node at ($(i)+(\i+.5,.4)$) {$\v$};
		\node[inner sep=3pt,minimum size=0pt]  (\n) at ($(i)+(\i+.5,.12)$) {$\bullet$};
	}
	\draw (n1) edge[->,bend right = 40] (n3);	
	\draw (n3) edge[->,bend right = 50] (n2);
	\draw[->,thick] ($(i)+(5.5,1.8)$) -- node[right]{\scriptsize\tt last\_free}++ (0,-.3);
	
	\draw[blue,->,very thick] (.4,-1) -- node[black,right,align=left]{\footnotesize insert key $x_1$ (hash=7) with value $y_1$\\\footnotesize collision with  $k_2$ having no value} ++ (0,-1.4);
	
	\node (i) at (0,-4.5) {};
	\draw[thick] (i) rectangle ++ (8,1);
	\foreach \i in {1,2,...,8} {
		\draw[thick] ($(i) + (\i,0)$) -- ++ (0,1);
	}
	\foreach \i in {0,1,...,7} {
		\node at ($(i) + (\i+.5,1.3)$) {\footnotesize $\i$};
	}
	\foreach \i/\k/\v/\n/\c in {2/k_1/v_1/n1/black, 7/x_1/y_1/nx1/blue, 5/k_3/\--/n3/black,6/\ell/w/m1/black/,4/z/\--/nz//black} {
		\node[\c] at ($(i)+(\i+.5,.75)$) {\small $\k$};
		\node[\c] at ($(i)+(\i+.5,.4)$) {\small $\v$};
		\node[inner sep=3pt,minimum size=0pt]  (\n) at ($(i)+(\i+.5,.12)$) {$\bullet$};
	}
	\draw (n1) edge[->,bend right = 40] (n3);	
	\draw (n3) edge[->,bend right = 50] (nx1);
	\draw[->,thick] ($(i)+(5.5,1.8)$) -- node[right]{\scriptsize\tt last\_free}++ (0,-.3);
	
	\draw[blue,->,very thick] (.4,-5.5) -- node[black,right,align=left]{\footnotesize insert key $x_2$ (hash=7) with value $y_2$\\\footnotesize collision with $x_1$, in its main position} ++ (0,-1.4);
	
	\node (i) at (0,-9) {};
	\draw[thick] (i) rectangle ++ (8,1);
	\foreach \i in {1,2,...,8} {
		\draw[thick] ($(i) + (\i,0)$) -- ++ (0,1);
	}
	\foreach \i in {0,1,...,7} {
		\node at ($(i) + (\i+.5,1.3)$) {\footnotesize $\i$};
	}
	\foreach \i/\k/\v/\n/\c in {2/k_1/v_1/n1/black, 7/x_1/y_1/nx1/black, 5/k_3/\--/n3/black,3/x_2/y_2/nx2/blue,6/\ell/w/m1/black, 4/z/\--/nz/black} {
		\node[\c] at ($(i)+(\i+.5,.75)$) {$\k$};
		\node[\c] at ($(i)+(\i+.5,.4)$) {$\v$};
		\node[\c,inner sep=3pt,minimum size=0pt]  (\n) at ($(i)+(\i+.5,.12)$) {$\bullet$};
	}
	\draw (n1) edge[->,bend right = 40] (n3);	
	\draw (n3) edge[->,bend right = 50] (nx1);
	
	\draw (nx1) edge[blue, ->,bend left = 35] (nx2);
	
	\draw[blue,->,thick] ($(i)+(3.5,1.8)$) -- node[right]{\scriptsize\tt last\_free}++ (0,-.3);

	\end{tikzpicture}
	\captionof{figure}{If the main position of the newly inserted key has a deleted value, it is just put there. If it is used, then the insertion proceeds as previously, considering the deleted spot as occupied for the search of a free spot. Observe that at the end, the hash values of 2 and 7 share the same chain, which could not happen with no deletion.}
	\label{fig:insertions with deleted}
\end{minipage}
\medskip

\reviewer{The insertion algorithm and other auxiliary functions in pseudo-code can be found in Appendix~\ref{sec:pseudocode}.}

The algorithms of~\autoref{algo:hashmap} correspond to the 
{\tt C} implementation as follows: the algorithm
{\tt Insertion} corresponds to {\tt luaH\_newkey}, the algorithm {\tt GetFreePos} corresponds to {\tt getfreepos}, and {\tt ReHash} to {\tt rehash}. More precisely, {\tt luaH\_newkey}  performs the insertion of a new element, while the function {\tt  luaH\_set} corresponds to an assignment \texttt{H[x] = y}, where the key $x$ could already be present in the table $H$ (the value is updated in this case).

\subsection{Settings and analysis when there is no deletion}
\label{subsec:analysis-hashmaps-no-deletions}
Designing accurate hash functions is a whole field on its own, and it is not the topic of this paper. So, throughout the article, we consider that for a hashmap of length $M$, the hash values of different keys are independent uniform random integers of $\{0,\ldots,\M-1\}$. They are sampled again when a rehash occurs. This is the standard assumption for such analysis that do not go into the details of specific hash functions~\cite{Knuth3}. This model is called the \emph{simple uniform hashing}.

In the absence of deletions, \lua's hashmaps behave  as \emph{separate chaining} hashing~\cite{Knuth3}. If $N$ is the number of elements in the table and $M$ the size of the table, the \emph{load factor} is classically defined as $\alpha=\N/\M \leq 1$.
 The average cost (measured as the number of slots inspected) of successful searches ($S_\N$) and unsuccessful searches ($U_\N$) is~\cite[\S 6.4, p. 525]{Knuth3}, as $N$ and $M$ tends to infinity: $
S_\N \approx 1+\frac{\alpha}2$ and 
$U_\N \approx 1+\frac{\alpha^2}2$. Note that it may seem strange that $U_\N \leq S_N$, but it is because of the implicit conditioning that successful searches do not consider free slots. It is worth mentioning here that if we used coalescing chaining, that is, adding a new pair $\pair{x}{y}$ always as successor of a conflicting pair $\pair{x'}{y'}$, whether that pair sits at its main location or not, then scanning the table to look for the \texttt{Predecessor} of $\pair{x'}{y'}$ would be unnecessary and the performance of successful and unsuccessful searches would degrade only slightly, namely to $S_\N\approx 1+\frac{e^{2\alpha}-1-2\alpha}{8\alpha}+\frac{\alpha}{4}$ and $U_\N\approx1+\frac{e^{2\alpha}-1-2\alpha}{4}$ (see~\cite[\S 6.4, p. 524]{Knuth3}). The improvement granted by pure separate chaining ``is not enough of an improvement over [coalesced chaining] to warrant changing the algorithm [=using separate chaining instead of coalesced chaining]''~\cite{Knuth3}.

Classically, the rehashing procedure by doubling the capacity has a constant amortized cost, as the pointer \lastfree cannot be decreased more than $\M$ times.
In conclusion, everything is well known when there is no deletion, and the expected amortized cost of an insertion is $\bigOh(1)$.

\subsection{An Unlikely Worst-Case Scenario When Considering Deletions}
\label{subsec:analysis-hashmaps-worst-case}
In this section, we establish that the situation changes significantly when we consider deletions, and that the expected amortized cost in not constant anymore. 
We estimate the time taken by the whole process by counting the number of times the function to insert a key is called: it is called once when an insertion is performed, unless a rehash occurs, in which case it is also called once for every key having an assigned value. Clearly, this number of calls is a lower bound for the complexity of the whole process. We suppose that keys are not integers, so only the hash-part is studied in this section.

\begin{example}[deletions-insertions in a full hashmap]
\label{worst-case-hash}
If we delete an element from a full hashmap of size $\M=2^m$ and then perform an insertion of a new element, we rehash the whole table into the same size and the hash is going to be full again. The only case it does not happen if when the hash value of the newly inserted key is equal to the position of the previously deleted key, but this is very unlikely (probability $\frac1M$).
Each rehash costs $\Theta(\M)$ calls
to the insertion function of \lua. If we keep this alternation of delete-insert for $M$ times, the cost is huge:  for $\Theta(M)$ operations, we obtain a quadratic cost $\Theta(\M^2)$. 
\end{example}

One can legitimately object that this scenario is too unlikely to question the implementation. Users normally alternate a number of insertions with deletions in a more complex pattern, and there is no reason it happens exactly when the table is full. In the next section we present a simple, yet natural, probabilistic model for insertions and deletions.

\section{Analysis Under a Probabilistic Model for Insertions/Deletions in Hashmaps}

Recall first that we use the classical simple uniform hashing assumption~\cite{Knuth3} to model the behavior of the hash function, in both our worst case and probabilistic analysis. Hence, there is a layer of randomness within the algorithms, induced by this model, as it is classical in the field of randomized algorithm.
In this section, we add probabilities on the \emph{inputs} of the algorithms, which is the first step to perform the average case analysis. This randomness is not within the algorithm, it is used to model the input. Hence, in some sense, we are proposing in the following an average case analysis of a randomized algorithm: there are two layers of randomness, within the algorithm and on the inputs.

Usually, such analysis is not necessary, as hashmaps behave efficiently in the usual worst case amortized analysis of randomized algorithms (the randomness is just used to model the hash function). Which trivially implies  the efficiency on average for any distribution on the inputs. 

The situation is different here, as we established the inefficiency for the worst-case amortized setting, and want to improve on this result by proving that the algorithm still have an expected $\Omega(\log n)$ running time by elementary operation in a reasonable probabilistic model on the inputs.

\subsection{Description of the Probabilistic Model}
For any fixed $p\in(\frac12, 1)$, our probabilistic model is the following:
\begin{itemize}
    \item We perform $T$ operations starting from an empty hashmap. 
    \item Each operation is an insertion with probability $p$ and a deletion with probability $1-p$, independently of the previous operations. There is one exception when the hashmap is empty, in which case an insertion is always performed (with probability $1$).
    \item An insertion consists in adding a new key. As we use the simple uniform hashing assumption, its associated hash value is a uniform random integer between $0$ and $M-1$, where $M$ is the current size of the hashmap. We only consider keys that are not integer, so we do not trigger the use of the array-part: we only study the hashmap in this section.
    \item A deletion consists in deleting a key already present in the hashmap, taken uniformly at random amongst the keys currently present in the hash-map. Recall that in Lua, only the value associated to the key is deleted during this process.
    \end{itemize}
When considering the $t$-th operation, for $t\in\{0,\ldots, T\}$, we will often write that we are ``at time $t$'', and freely use the subscript $t$ to design the current value of a parameter of the hashmap at time $t$. For instance, $M_t$ denote the size of the hashmap obtained after $t$ operations. Observe that in our settings, we have a probabilistic process, and $M_t$ is a random variable. 

The parameter $p$ is chosen greater than $\frac12$ so that the expected number of keys and the expected size of the hashmap increase linearly with the time. Taking $p<\frac12$ yields non-interesting cases, and $p=\frac12$ is a very specific, singular case, which could be studied using similar techniques. 

\subsection{Technical tools}
In this section, we introduce the main tool we use for our proofs: Hoeffding's inequality for binomial random variables.
For $\theta\in(0,1)$ and $s\in\mathbb{Z}_{>0}$, we denote by ${\tt Bin}(s,\theta)$ the binomial law defined for all $i\in\{0,\ldots, s\}$ by:
\[
\Pr\left({\tt Bin}(s,\theta)=i\right) =
\binom{s}{i} \theta^i(1-\theta)^{s-i}.
\]

The following is a particular case of a classical result in Probability Theory,
which gives exponential bounds for the probability of ${\tt Bin}(s,\theta)$ deviating from its expected value $s\theta$. See for instance~\cite[Thm 2.8]{concentrationinequalities}
\begin{proposition}[Hoeffding's inequality] 
\label{prop:hoeffding-ineq}
Let $s\in \mathbb{Z}_{>0}$ and $\theta \in (0,1)$. Then for $t\in \mathbb{R}_{>0}$ we have the bounds:
\label{prop:hoeffding}
\begin{equation} \label{eq:hoeffding-1}
    \Pr\left({\tt Bin}(s,\theta)- s \theta \geq t \right) \leq \exp\left(-\frac{2t^2}{s}\right)\,,
\end{equation}
and
\begin{equation}\label{eq:hoeffding-2}
    \Pr\left(|{\tt Bin}(s,\theta)- s \theta| \geq t \right) \leq 2\exp\left(-\frac{2t^2}{s}\right)\,.
\end{equation}
\end{proposition}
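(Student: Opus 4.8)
The plan is to prove the one-sided bound \eqref{eq:hoeffding-1} first and then obtain the two-sided bound \eqref{eq:hoeffding-2} by a union bound. The starting point is that ${\tt Bin}(s,\theta)$ has the same distribution as a sum $B = \sum_{i=1}^s X_i$ of $s$ independent Bernoulli random variables with $\Pr(X_i = 1) = \theta$ and $\Pr(X_i = 0) = 1-\theta$, so that $\mathbb{E}[X_i] = \theta$ and $\mathbb{E}[B] = s\theta$. The whole argument then rests on the classical Chernoff exponential-moment method applied to this sum.

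First I would apply the exponential Markov inequality: for every $\lambda > 0$,
\[
\Pr(B - s\theta \geq t) = \Pr\left(e^{\lambda(B - s\theta)} \geq e^{\lambda t}\right) \leq e^{-\lambda t}\, \mathbb{E}\left[e^{\lambda(B - s\theta)}\right].
\]
By independence of the $X_i$ the exponential moment factorizes, $\mathbb{E}[e^{\lambda(B - s\theta)}] = \prod_{i=1}^s \mathbb{E}[e^{\lambda(X_i - \theta)}]$, so the problem reduces to controlling the single-variable factor $\mathbb{E}[e^{\lambda(X_i - \theta)}]$.

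The crux of the argument, and the step I expect to be the main obstacle, is Hoeffding's lemma: for a centered random variable $Y$ taking values in an interval $[a,b]$ one has $\mathbb{E}[e^{\lambda Y}] \leq \exp(\lambda^2(b-a)^2/8)$. Applied to $Y = X_i - \theta \in [-\theta,\, 1-\theta]$, whose range has length $b-a = 1$, this gives $\mathbb{E}[e^{\lambda(X_i-\theta)}] \leq e^{\lambda^2/8}$ and hence $\mathbb{E}[e^{\lambda(B-s\theta)}] \leq e^{s\lambda^2/8}$. To establish the lemma I would use the convexity of $x \mapsto e^{\lambda x}$ to bound $e^{\lambda Y}$ by the chord through the endpoints of $[a,b]$, take expectations using $\mathbb{E}[Y] = 0$, and rewrite the result as $e^{\phi(h)}$ with $h = \lambda(b-a)$. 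A short computation shows $\phi(0) = \phi'(0) = 0$ and $\phi''(h) \leq 1/4$ for all $h$, whence Taylor's theorem yields $\phi(h) \leq h^2/8$. The uniform bound $\phi''(h) \leq 1/4$, which follows by writing $\phi''$ in the form $u(1-u)$ with $u \in (0,1)$ and using $u(1-u) \leq 1/4$, is the only genuinely delicate point of the whole proof.

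It then remains to optimize over $\lambda$. Substituting the single-variable bound gives $\Pr(B - s\theta \geq t) \leq \exp(-\lambda t + s\lambda^2/8)$, and minimizing the exponent over $\lambda > 0$ at $\lambda = 4t/s$ produces the value $-2t^2/s$, which is exactly \eqref{eq:hoeffding-1}. Finally, running the identical argument on $-B$ (equivalently, replacing $\theta$ by $1-\theta$) bounds $\Pr(B - s\theta \leq -t)$ by the same quantity, so the union bound $\Pr(|B - s\theta| \geq t) \leq \Pr(B - s\theta \geq t) + \Pr(B - s\theta \leq -t)$ delivers the factor $2$ in \eqref{eq:hoeffding-2}.
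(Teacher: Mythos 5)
Your argument is correct and complete: it is the standard Chernoff-method proof (exponential Markov inequality, factorization by independence, Hoeffding's lemma via convexity and the bound $\phi''\leq 1/4$, optimization at $\lambda=4t/s$, and a union bound for the two-sided version). The paper itself gives no proof of this proposition --- it cites it as a classical result from the reference on concentration inequalities --- and the argument you outline is precisely the textbook proof found there, so there is nothing to reconcile.
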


\subsection{Main result and proof strategy}

Let us state our main result, which establishes the inefficiency of the \lua's implementation of hashmap for our probabilistic scenario. We use the following definition: a property holds with exponentially (resp.\ super-polynomially) high probability in $T$ when the probability that it does not hold is less than $\exp(-cT)$ (resp.\ $\exp(-T^{c})$) for some positive constant $c$ and $T$ sufficiently large.

\begin{theorem}
\label{thm:main proba hashmap}
Let $p\in(\frac12,1)$. Starting from an empty hashmap, if $T$ operations of insertions with probability $p$ and deletions with probability $1-p$ are performed, then with super-polynomially  high probability in $T$, the insertion function of \lua is called  $\Omega(T\log T)$ times. As a consequence, the expected running time of the whole process is in $\Omega(T\log T)$.
\end{theorem}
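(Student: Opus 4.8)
The plan is to lower-bound the number of calls to the insertion function by the work done during rehashes alone: writing $n_r$ for the number of live keys at the $r$-th rehash, each rehash reinserts those $n_r$ keys, so the total number of calls is at least $\sum_r n_r$. It therefore suffices to exhibit $\Omega(\log T)$ rehashes, each reinserting $\Omega(T)$ keys. First I would set up the coarse geometry of the process. Let $I_t$ and $D_t$ be the numbers of insertions and deletions among the first $t$ operations; since each operation is an independent insertion with probability $p$ (the empty-table exception only forces extra insertions and can only help a lower bound), $I_t-D_t$ concentrates around $(2p-1)t$ by Hoeffding's inequality (Proposition~\ref{prop:hoeffding-ineq}). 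Hence, with super-polynomially high probability, the number of live keys satisfies $n_t=\Theta(t)$ throughout a late window, and since the rehash rule keeps the load factor in $[\tfrac12,1)$ the size obeys $M_t=\Theta(t)$ as well. In particular $M_T=\Theta(T)$, and once $T$ is large enough that doublings have occurred, the plateau at size $S=M_T/2=\Theta(T)$ is a \emph{completed epoch}: it begins, right after a doubling, at load factor $\approx\tfrac12$ (hence $\approx S/2$ free slots) and ends only when the later doubling to $2S$ occurs.

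The heart of the argument is the behaviour inside one epoch of fixed size $S$. The key structural fact, which I would isolate as a lemma, is that the rehash is triggered \emph{exactly} when the number of free slots reaches $0$: because \lastfree\ only moves past slots whose key is non-\nil\ and deleted slots are never reclaimed, when \call{GetFreePos}{} fails every slot is either used or deleted. Moreover, a free slot is consumed by an insertion unless that insertion lands on a deleted slot (a tombstone), which is then overwritten without changing the free count. Thus, if a cycle (the operations between two consecutive same-size rehashes) starts with $f_0$ free slots and no tombstones, it ends after exactly $f_0$ free-consuming insertions, and since the ensuing rehash turns every tombstone back into a free slot, the free count of the \emph{next} cycle equals the number of tombstones accumulated, namely $D_{\text{cyc}}-R_{\text{cyc}}$, where $D_{\text{cyc}}$ and $R_{\text{cyc}}$ are the deletions and tombstone reuses during the cycle. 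I would then show this updated free count is $\Theta(f_0)$: the deletion count concentrates at $\frac{1-p}{p}f_0$, while the reuse count is controlled by the tombstone fraction, which stays below $\tfrac12$, giving $R_{\text{cyc}}\le c\,f_0$ with a constant $c<\frac{1-p}{p}$. Consequently the free count shrinks from one cycle to the next by a factor $\rho(p)\in(0,1)$ bounded away from $0$ and $1$.

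Iterating this geometric shrink from the initial $f_0\approx S/2$ down to $f_0\approx\sqrt S$ produces $\Theta(\log S)=\Omega(\log T)$ successive rehashes, all within the same size-$S$ plateau: no doubling can intervene while $f_0\ge\sqrt S$, because a doubling requires a tombstone-free cycle, which is excluded with high probability once $D_{\text{cyc}}=\Theta(\sqrt S)\to\infty$. Each of these rehashes reinserts $n_r\ge S/2=\Omega(T)$ live keys, so $\sum_r n_r=\Omega(\log T)\cdot\Omega(T)=\Omega(T\log T)$, and the stated bound on the expected running time follows because the running time dominates this count. Restricting to cycles with $f_0\ge\sqrt S=\Theta(\sqrt T)$ is precisely what yields super-polynomially (rather than exponentially) high probability: the per-cycle Hoeffding exponents are of order $f_0\ge\sqrt T$, so the union bound over the $\Theta(\log T)$ relevant cycles fails with probability at most $\exp(-\Theta(\sqrt T))$.

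The main obstacle is the per-cycle concentration in the presence of two intertwined sources of randomness — the operation sequence and the simple-uniform hash values — together with the fact that a cycle ends at a \emph{random stopping time}, the first moment the free count vanishes. Applying Proposition~\ref{prop:hoeffding-ineq} cleanly therefore requires either conditioning on the realized numbers of insertions and deletions in the cycle and taking a union bound over the possible cycle lengths, or rewriting each cycle length as a sum of independent increments stopped at that time. Controlling $R_{\text{cyc}}$ is the most delicate point, since it couples the hashing randomness (whether an inserted key falls on a tombstone) with the evolving set of tombstones; I expect to handle it by bounding the tombstone fraction uniformly over the cycle and dominating $R_{\text{cyc}}$ by a binomial with success probability below $\tfrac12$, to which Hoeffding's inequality again applies.
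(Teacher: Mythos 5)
Your overall architecture coincides with the paper's: you lower-bound the cost by the reinsertions performed at rehashes, use Hoeffding (Proposition~\ref{prop:hoeffding-ineq}) to pin the table size at $\Theta(T)$, show that within a plateau of fixed size $M=\Theta(T)$ the free count after each same-size rehash decays by at most a constant factor, iterate down to $\sqrt{T}$ to get $\Omega(\log T)$ rehashes each costing $\Theta(T)$, and union-bound $O(\log T)$ failure events of probability $e^{-\Theta(\sqrt{T})}$. Your counting identity (new free count $=D_{\mathrm{cyc}}-R_{\mathrm{cyc}}=$ number of tombstones surviving to the rehash) is exactly the quantity the paper tracks as $\delta_t$.

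The gap is in the justification of the central per-cycle estimate. You claim $R_{\mathrm{cyc}}\le c\,f_0$ with $c<\frac{1-p}{p}$ ``because the tombstone fraction stays below $\tfrac12$''. This does not work. Writing $I=f_0+R_{\mathrm{cyc}}$ for the number of insertions in the cycle, one has $D_{\mathrm{cyc}}\approx\frac{1-p}{p}I$ while $R_{\mathrm{cyc}}$ accrues at rate $\delta_t/M$ per insertion, so $D_{\mathrm{cyc}}-R_{\mathrm{cyc}}$ behaves like a sum of terms $\frac{1-p}{p}-\frac{\delta_t}{M}$: to extract a positive constant times $f_0$ you need the tombstone fraction bounded strictly below $\frac{1-p}{p}$, not below $\tfrac12$ (for $p\ge\tfrac23$ the latter is the weaker, insufficient condition). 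Worse, no uniform-in-time bound of this kind can close the argument: $\delta_t$ has drift $(1-p)-p\delta_t/M$ per operation and climbs toward the equilibrium $\frac{1-p}{p}M$, so the best uniform bound provable is precisely $\delta_t/M\le\frac{1-p}{p}$ (the content of Lemma~\ref{lm:rehash before equilibrium}), which plugged into your identity yields only $D_{\mathrm{cyc}}-R_{\mathrm{cyc}}\ge 0$ --- vacuous. The paper closes this with a two-phase analysis that you would need in some form: during the first $t_0=\lfloor\frac{1-p}{p}f_0\rfloor$ operations one has deterministically $\delta_t\le\frac{1-p}{p}f_0\le\frac{1-p}{2p}M$, so the reuse probability is at most $\frac{1-p}{2}$, half the deletion probability, and $\delta$ accumulates $\Omega(f_0)$ tombstones (Lemma~\ref{lm:t0}); Lemma~\ref{lm:rehash before equilibrium} then guarantees $\delta_t$ is stochastically non-decreasing for the remainder of the cycle, so $\Omega(f_0)$ tombstones survive until the rehash (Lemma~\ref{lm:tau}). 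A minor further imprecision: a size doubling requires zero tombstones at the instant of the rehash, not a ``tombstone-free cycle''; your conclusion is nevertheless correct once $\delta_{t_h'}\ge\gamma f_0>0$ is established.
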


As we will see, this is mostly because \lua spends a lot of time rehashing almost full hashmaps without increasing their size,  impairing the efficiency of the data structure. This shows on our simulation, as depicted in Fig.~\ref{fig:experiment-rehash-total}. 

Informally, the proof strategy of Theorem~\ref{thm:main proba hashmap} is the following. With high probability, the number of keys in the hashmap after $T$ operation is linear in $T$. We can further prove that in the process, for a well-chosen $M=\Theta(T)$, with high probability, we allocate a hashmap of size $M$ for the first time at some time $t_{M}$ and of size $2M$ for the first time at some time $t_{2M}$ with $t_M\leq t_{2M}\leq T$.

Since keys are added one by one, at time $t_M$ the hashmap contains exactly $M/2-1$ empty slots. We will establish that for some positive constant $\gamma$, with high probability, if there are $m$ empty slots after a rehash of size $M$, then there are at least $\gamma m$ empty slots at the next rehash (the hashmap's size is therefore still $M$ after the rehash). This holds with super-polynomially high probability for $m\geq \sqrt{M}$. The intuitive reason behind this phenomenon is that during the process of adding $m$ new keys, some are deleted and sufficiently many slots they occupied are not reused by the insertion process.

Hence, with super-polynomially  high probability, at the rehash times starting from time $t_M$, the hashmap contains $M/2-1$ empty slots, then at least $\gamma(M/2-1)$ empty slots, then at least $\gamma^2(M/2-1)$ empty slots, etc. So we need a logarithmic number of rehashes before the number of empty slots becomes smaller than $\sqrt{M}$. Since each rehash costs $\Theta(M)$ re-insertions, the total cost is $\Omega(M\log M)$ time, which is $\Omega(T\log T)$.

To formalize this proof sketch, we have to  study the process and evaluate finely the error term, that is, the probability that the typical scenario describe above does not hold.

\begin{figure}
\begin{minipage}[t]{0.45\textwidth}
\vspace{.6cm} 
        {
  \begin{tikzpicture}[xscale=0.65,yscale=0.55]
  \begin{axis}[
  xmax=1048576,xmin=1,ymin=0,
   xlabel={size of new hashmap $\M=2^m$},
    ylabel={number of rehashes},every y tick scale label/.append style={xshift=-1.6em},xmode=log,  log basis x={2},every x tick scale label/.append style={xshift=2.8em, yshift=1.4em}]
    
        \addplot  [green!50!black,mark=x]  coordinates {

(1,1.000000000)
(2,1.670000000)
(4,2.130000000)
(8,3.230000000)
(16,4.680000000)
(32,5.770000000)
(64,7.880000000)
(128,9.260000000)
(256,10.960000000)
(512,11.740000000)
(1024,13.620000000)
(2048,16.020000000)
(4096,17.930000000)
(8192,19.300000000)
(16384,20.850000000)
(32768,22.850000000)
(65536,24.360000000)
(131072,25.970000000)
(262144,27.870000000)
(524288,29.770000000)
(1048576,31.530000000)
     };

    \addplot [blue,mark=x] coordinates {

(1,1.000000000)
(2,1.020000000)
(4,1.220000000)
(8,1.220000000)
(16,1.700000000)
(32,1.850000000)
(64,2.190000000)
(128,2.380000000)
(256,2.900000000)
(512,3.010000000)
(1024,3.370000000)
(2048,3.710000000)
(4096,4.090000000)
(8192,4.360000000)
(16384,4.690000000)
(32768,5.040000000)
(65536,5.390000000)
(131072,5.770000000)
(262144,5.950000000)
(524288,6.330000000)
(1048576,6.640000000)
(2097152,7.000000000)
(4194304,7.150000000)
     };

    \addplot  [red,mark=x]  coordinates {
(1,1.000000000)
(2,1.280000000)
(4,1.470000000)
(8,1.870000000)
(16,2.470000000)
(32,3.230000000)
(64,3.630000000)
(128,4.100000000)
(256,4.700000000)
(512,5.540000000)
(1024,6.060000000)
(2048,6.620000000)
(4096,7.550000000)
(8192,8.080000000)
(16384,8.650000000)
(32768,9.380000000)
(65536,9.710000000)
(131072,10.680000000)
(262144,11.270000000)
(524288,11.750000000)
(1048576,12.340000000)
(2097152,12.990000000)
(4194304,13.830000000)
     };

  \end{axis}
\end{tikzpicture}  

        }
\end{minipage}
\hfill
\begin{minipage}[t]{0.5\textwidth}
    \vspace{0cm}
    \caption{Total number of rehashes ($y$-axis) producing a hashmap of a given size $\M$ ($x$-axis) during a run of $T=10^7$ operations. The plot is logarithmic in $\M$($x$-axis). Plots correspond to $p=0.6$ (green), $p=0.75$ (red) and $p=0.9$ (blue). Each point is the average result of 100 simulations directly using \lua. For instance, if $p=0.75$ we get roughly $10$ rehash that produces a table of size $2^{16}$, where we would want $1$ (or $2$ if we are unlucky).
        \label{fig:experiment-rehash-total}
    }\end{minipage}
    \vspace{-.5cm}
\end{figure}

\subsection{Proof of the main result}
\label{subsec:analysis-hashmaps-avg-case}
This section is devoted to the proof of Theorem~\ref{thm:main proba hashmap}, which follow the general strategy described in the previous section. Let us first introduce some  useful notations.
At any time $t\in\{0,\ldots, T\}$, the hashmap has size $\M_t$, and contains $\free_t$ free cells, $\del_t$ deleted cells (keys with no values), and $\used_t$ used cells\footnote{We use the Greek letter $\used$ for consistency with the other notations in this section, $\used_t$ corresponds to $\N$ when no deletion occurred.} (keys with values). As every slot is in one of the three states, at any time $t$, we have the identity $\free_t+\del_t+\used_t=M_t$. Also, at time $t=0$ the hashmap is empty, and thus $\M_0=\free_0=\del_0=\used_0=0$.

As an insertion is performed with probability
$p>\frac12$ and a deletion with probability $1-p<\frac12$, the number of used keys in
the table increases by $2p-1>0$ in expectation with each
operation. This can be turned into a statement with exponentially high
probability using Hoeffding’s inequality (Proposition~\ref{prop:hoeffding-ineq}), as stated in the
following lemma.

\begin{lemma}\label{lm:random increase}
For any real constant $c$ such that $0<c<2p-1$, if at a given time $s$ there are $\used_s$ used slots in the hashmap, then after $t$ more operations $\used_{s+t}\geq \used_s+ct$ with exponentially high probability in $t$.
\end{lemma}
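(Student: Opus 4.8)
The plan is to compare the evolution of $\used$ with a plain random walk driven by independent Bernoulli coins, and then invoke Hoeffding's inequality (Proposition~\ref{prop:hoeffding-ineq}). The starting observation is that, between times $s$ and $s+t$, every operation changes $\used$ in a completely predictable way: an insertion raises $\used$ by exactly one (in this model each inserted key is new), a deletion lowers it by exactly one (the value is set to \nil, turning a used slot into a deleted one), and a rehash leaves $\used$ unchanged (only used keys are re-inserted). Hence $\used_{s+t}-\used_s = I - D$, where $I$ and $D$ are the numbers of insertions and deletions actually performed in these $t$ steps.

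Next I would handle the only delicate point, the boundary rule that forces an insertion whenever the hashmap is empty. Let $H$ be the number of ``insert'' coins among the $t$ operations, so that $H\sim{\tt Bin}(t,p)$, and let $F\ge 0$ be the number of \emph{forced} insertions (``delete'' coins overridden because $\used=0$). Since heads are never overridden, $I = H + F$ and $D = (t-H)-F$, giving the deterministic identity
\[
\used_{s+t}-\used_s \;=\; I - D \;=\; 2H - t + 2F \;\ge\; 2H - t.
\]
The key is that the awkward reflection at $\used=0$ can only \emph{increase} $\used$ relative to the free walk $2H-t$, so it drops out of the lower bound entirely and we are left with a clean function of $H$ alone.

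It then remains to show that $2H-t\ge ct$ holds with exponentially high probability. Since $\mathbb{E}[H]=pt$, we have $\mathbb{E}[2H-t]=(2p-1)t$, and because $c<2p-1$ the target $ct$ sits strictly below the mean. Writing $\eta=(2p-1-c)/2>0$, the event $2H-t<ct$ is exactly $H<pt-\eta t$, a lower-tail deviation of $H$. Applying Proposition~\ref{prop:hoeffding-ineq} to $t-H\sim{\tt Bin}(t,1-p)$ with $t$ trials and deviation $\eta t$ bounds this probability by $\exp\!\left(-2(\eta t)^2/t\right)=\exp(-2\eta^2 t)$, which is exponentially small in $t$. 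Therefore $\used_{s+t}\ge \used_s+ct$ with exponentially high probability in $t$, as claimed.

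Conceptually the heart of the argument is the deterministic inequality $\used_{s+t}-\used_s\ge 2H-t$; once it is established, the statement reduces to a textbook concentration bound. Consequently the only step I expect to require genuine care is the bookkeeping at the empty-hashmap boundary: one must verify that every forced insertion enters the count as a nonnegative correction $F$, so that the boundary rule strictly helps and never shrinks $\used$ below the free-walk value.
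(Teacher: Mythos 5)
Your proposal is correct and follows essentially the same route as the paper: express the change in $\used$ as twice the number of insertions minus $t$, note that the actual number of insertions stochastically dominates a $\bin(t,p)$ variable because the empty-hashmap rule can only force extra insertions, and conclude with Hoeffding's inequality. Your explicit bookkeeping with the nonnegative correction $F$ is just a more detailed rendering of the paper's one-line remark that the true insertion count is stochastically lower bounded by $\bin(t,p)$.
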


\begin{proof}
Let $I_t$ denote the number of insertions made
during the first $t$ operations. If the hashmap is never empty, one has $I_t\sim\bin(t,p)$.  Since operations that are not insertions are deletions in our model, the number of used slots at time $t$ is $2I_t-t$.  By Hoeffding Inequality, Equation~\eqref{eq:hoeffding-2}, this quantity is strongly concentrated around its expectation $2pt-t=(2p-1)t$. We deal with a possibly empty hashmap by observing that the real $I_t$ is stochastically lower bounded by $\bin(t,p)$, yielding the result.
 \end{proof}

Set $\beta = \frac{2p-1}3$. As we want to establish an asymptotic result in $T$, we assume in the sequel that $T$ is sufficiently large to establish the required properties. A first consequence of Lemma~\ref{lm:random increase} is that, with exponentially high probability in $T$, at some time the hashmap will have size \M with $\frac{\beta}2T< \M \leq \beta T$, and some time after that the  hashmap will reach size $2\M$. 

Let $t_h$ be any time when we just rehash into a hashmap of size $\M$, where $\M$ is the unique power of two such that  $\frac{\beta}2T< \M \leq \beta T$. It is not necessarily the first time we rehash into a hashmap of size $\M$. As we just rehashed, we have $\del_{t_h}=0$, and $\M = \M_{t_h} =\used_{t_h}+\free_{t_h}$.
As long as the hashmap is not empty 
and that there is no rehash 
we have for $t\geq t_h$:
\begin{equation}\label{eq:insertions-deletions}
	\del_{t+1} = 
	\begin{cases}
		\del_t-1 & \text{with probability }\frac{p\del_t}{M} \qquad \qquad [\textit{insertion in a deleted slot}], \\
		\del_t & \text{with probability }p\left(1-\frac{\del_t}{M}\right) \ \  \,[\textit{insertion in a free slot}],\\
		\del_t + 1 & \text{with probability }1-p\qquad \quad\,  [\textit{deletion}].
	\end{cases}
\end{equation}

Intuitively, by comparing $\frac{p\del_t}{M}$ and $1-p$ in Equation~\eqref{eq:insertions-deletions}, one can see that $\del_t$ tends to increase when $p\del_t/M < 1-p$ and that it tends to decrease when $p\del_t/M > 1-p$. So the equilibrium point is at $\delta_t \approx \frac{1-p}{p}M$. Fortunately for the analysis, we  show that a rehash occurs before  $\delta_t$ reaches this value, with exponentially high probability, so that at any step before the rehash $\delta_t$ is more likely to increase than to decrease. Let $t_h'$ denote the time of the next rehash. We can prove the following lemma.

\begin{restatable}{lemma}{lmbeforeeq}\label{lm:rehash before equilibrium}
For any positive $d>\frac1{2p-1}$, with exponentially high probability in $\free_{t_h}$, we have (i) $t_h'\leq t_h+\lceil d\free_{t_h}\rceil$, and (ii) at any time $t$  between $t_h$ and $t_h'$ we have $\frac{p\del_t}{M}\leq 1-p$.
\end{restatable}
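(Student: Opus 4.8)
The plan is to follow the fluid-limit heuristic implicit in Equation~\eqref{eq:insertions-deletions} and to control the deviations of the true process from it with Hoeffding's inequality (Proposition~\ref{prop:hoeffding-ineq}). First I would set up exact bookkeeping for the window. Counting only the operations performed strictly after $t_h$, let $A_j$, $B_j$ and $C_j$ denote, after $j$ such operations, the number of insertions that consume a free cell (main position free, or colliding with a used key so that \textsc{GetFreePos} is invoked), the number of insertions whose main position is a deleted cell, and the number of deletions, so that $j=A_j+B_j+C_j$ and $I_j:=A_j+B_j$ is the number of insertions. Reading off Equation~\eqref{eq:insertions-deletions} gives the exact identities $\free_{t_h+j}=\free_{t_h}-A_j$ and $\del_{t_h+j}=C_j-B_j$; in particular $\free$ is non-increasing up to the next rehash (deletions and reused deleted cells never create free cells), and a rehash is triggered exactly once the $\free_{t_h}$ initial free cells are exhausted and a collision with a used key occurs. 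Write $L:=\frac{1-p}{p}\M$ for the equilibrium level identified after Equation~\eqref{eq:insertions-deletions}; claim~(ii) is precisely $\del_t\le L$, and I will crucially use that a rehash into size $\M$ leaves $\used_{t_h}>\M/2$ used cells, hence $\free_{t_h}<\M/2$.

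The heart of the proof is claim~(ii), which I would establish by a first-violation argument that avoids the circular dependence of the rates on $\del_t$. Since the operation types are i.i.d.\ Bernoulli$(p)$ while the hashmap is nonempty---and Lemma~\ref{lm:random increase} ensures $\used$ only grows, so nonemptiness holds throughout the window with exponentially high probability---Hoeffding's inequality~\eqref{eq:hoeffding-1} applied to the deletion count of each prefix, together with a union bound over the at most $k:=\lceil d\,\free_{t_h}\rceil$ prefixes, furnishes an event $\mathcal E$, of exponentially high probability in $\free_{t_h}$, on which $C_j\le (1-p)j+p\eta$, equivalently $C_j\le\frac{1-p}{p}I_j+\eta$, for all $j\le k$, with the slack fixed at $\eta:=L/4=\Theta(\M)$. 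Suppose now, for contradiction, that on $\mathcal E$ the bound $\del_t\le L$ first fails at some time $t_h+j^\star\le t_h'$, so that $C_{j^\star}-B_{j^\star}>L$. No rehash has yet occurred, so $\free_{t_h+j^\star}\ge 0$ gives $A_{j^\star}\le\free_{t_h}$, and substituting $I_{j^\star}=A_{j^\star}+B_{j^\star}$ into the concentration bound yields
\[
L< C_{j^\star}-B_{j^\star}\le \tfrac{1-p}{p}A_{j^\star}-\tfrac{2p-1}{p}B_{j^\star}+\eta\le \tfrac{1-p}{p}\free_{t_h}+\eta\le \tfrac{1-p}{p}\cdot\tfrac{\M}{2}+\eta=\tfrac{L}{2}+\eta .
\]
This forces $\eta>L/2$, contradicting $\eta=L/4$. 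Hence on $\mathcal E$ the threshold is never crossed before $t_h'$, which is exactly~(ii).

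Granting~(ii), claim~(i) follows from a clean rate bound. At every step before $t_h'$ we now have $\del_t\le L$, so the conditional probability that the operation consumes a free cell is $p\,(1-\del_t/\M)\ge p\,(1-L/\M)=2p-1$. Coupling each step through a common uniform variable shows $A_j$ stochastically dominates $\bin(j,2p-1)$, so Hoeffding's inequality gives, with exponentially high probability in $\free_{t_h}$, $A_k\ge (2p-1)k-\Theta(\free_{t_h})\ge (2p-1)d\,\free_{t_h}-\Theta(\free_{t_h})$. Because $d>\frac1{2p-1}$ the leading term strictly exceeds $\free_{t_h}$, so the initial free cells are exhausted well before step $k$; thereafter every insertion collides with a used key with probability at least $2p-1$ (as $\used_t/\M\ge(2p-1)/p$ once $\free_t=0$), so a rehash is triggered within the remaining $\Theta(\free_{t_h})$ steps with exponentially high probability. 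Intersecting the three exponentially high probability events gives $t_h'\le t_h+\lceil d\,\free_{t_h}\rceil$ together with~(ii).

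The main obstacle, and the reason (ii) must be handled before (i), is the self-referential coupling: the transition probabilities in Equation~\eqref{eq:insertions-deletions} depend on $\del_t$, which is exactly the quantity to be bounded, so the increments cannot be treated as independent. The first-violation device breaks this circularity by deriving (ii) solely from the concentration of the genuinely i.i.d.\ operation types and the deterministic bookkeeping identities, never invoking (i); the rate bound $2p-1$, and hence (i), is then immediate. The only remaining care is quantitative: since the gap $L/2$ to beat is $\Theta(\M)$ while the window has length $\Theta(\free_{t_h})=O(\M)$, choosing the slack $\eta$ a fixed fraction of $\M$ keeps each Hoeffding failure probability at $\exp(-\Omega(\free_{t_h}))$ and survives the union bound over the $O(\free_{t_h})$ prefixes, yielding the stated exponentially high probability in $\free_{t_h}$.
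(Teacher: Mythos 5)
Your argument is correct, but it reaches both claims by a genuinely different route than the paper. For part (ii), the paper introduces the unbounded auxiliary processes $\tilde\del_t,\tilde\used_t$ and bounds the probability that $\tilde\del_t$ hits $\kappa=\lfloor\frac{1-p}{p}M\rfloor+1$ by splitting the window in time: early times are excluded because $\tilde\del_t$ is dominated by a $\bin(t-t_h,1-p)$ count of deletions, and late times are excluded because by then $\tilde\used_t$ has grown past $M-\kappa$, so a rehash must already have occurred. You replace this two-regime analysis by a conservation law: writing $\del_{t_h+j}=C_j-B_j$ and $\free_{t_h+j}=\free_{t_h}-A_j\ge 0$, the cap $A_j\le\free_{t_h}<M/2$ combines with concentration of the deletion count alone to give the deterministic chain $\del_{t_h+j}\le\frac{1-p}{p}A_j+\eta\le\frac{L}{2}+\frac{L}{4}$, a single Hoeffding event with no time splitting, and in fact the stronger bound $\del_t\le\frac34 L$. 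This is arguably cleaner and isolates the real mechanism (free cells are only ever consumed, never created, between rehashes). For part (i) the trade goes the other way: the paper gets it in one line from Lemma~\ref{lm:random increase} ($\used_{t_h+\lceil d\free_{t_h}\rceil}>\used_{t_h}+\free_{t_h}=M$ forces a rehash), independently of (ii), whereas your two-stage exhaust-then-trigger argument is longer and needs (ii) as input to lower-bound the rate $p(1-\del_t/M)\ge 2p-1$.

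Two small points of care. First, $A_j$ cannot literally dominate $\bin(j,2p-1)$ for all $j$ since it is capped at $\free_{t_h}$; phrase this as a bound on the hitting time of level $\free_{t_h}$ (or on the stopped process). Second, your event $\mathcal E$ only controls prefixes $j\le k=\lceil d\free_{t_h}\rceil$, so the first-violation argument establishes (ii) only on $[t_h,\min(t_h',t_h+k)]$; the full claim ``for all $t$ between $t_h$ and $t_h'$'' then follows because (i) puts $t_h'$ inside that window. Your final intersection of events delivers exactly this, but the assertion that (ii) is proved ``never invoking (i)'' should be qualified accordingly. Neither point is a gap, only a matter of stating the order of quantifiers precisely.
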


\begin{proof}
The first statement is a direct consequence of Lemma~\ref{lm:random increase}, taking $s=t_h$, $t=\lceil d\free_{t_h}\rceil$ and $c$ such that  $\frac1d<c<2p-1$: with exponentially high probability in $\free_{t_h}$, there are more than $\used_{t_h}+c\lceil d\free_{t_h}\rceil > \used_{t_h}+\free_{t_h} = M$ used keys in the hashmap; they do not fit in a table of size $M$, hence a rehash already occurred and therefore $t_h'\leq t_h+\lceil d\free_{t_h}\rceil$.

For the second part we consider the processes $\tilde{\del}_t$ and $\tilde{\used}_t$ defined exactly as $\del_t$ and $\nu_t$ except that there are no border condition (rehash or empty table), so that they can continue indefinitely. Observe that 
$\tilde{\del}_t$ and $\del_t$  (resp. $\tilde{\used}_t$ and $\used_t$) coincide as long as no rehash is triggered and as long as the hashmap is not empty. We want to bound from above the probability that 
$\del_t > \frac{1-p}p M$ for some $t$ between $t_h$ and $t_h'$. By the union bound, this probability is upper bounded by the probability that $\tilde\del_t > \frac{1-p}p M$ for some $t$ between $t_h$ and $t_h+\lceil d\free_{t_h}\rceil$, plus some exponentially small probability for the cases where $\tilde{\del}_t$ and $\del_t$ cease to coincide. We can therefore focus on finding an upper bound for the quantity $P$, defined as follows:
\[
P:=\sum_{t=t_h}^{t_h+\lceil d\free_{t_h}\rceil} \Pr\left(\tilde{\del_t}=\left\lfloor\frac{1-p}{p}M\right\rfloor+1\text{ and }t\leq t'_h\right),
\]
using the fact that since $\tilde\del_t$ increases  by at most one at each step, $\tilde\del_t>\frac{1-p}{p}M$ implies that $\tilde\del_s=\left\lfloor\frac{1-p}{p}M\right\rfloor+1$ for some $s\leq t$. Clearly, $P$ is an upper bound for the probability that (ii) does not hold for $\tilde\delta$. 

 Moreover, as  we just rehashed at time $t_h$, we have $\tilde\delta_{t_h}=0$. Hence, the probabilities for $t\leq t_h+ \left\lfloor\frac{1-p}{p}M\right\rfloor$ in $P$ contribute to zero to the sum: not sufficiently many keys can have been inserted. Let $\kappa=\left\lfloor\frac{1-p}{p}M\right\rfloor+1$, our expression for $P$ simplifies to
\[
P=\sum_{t=t_h+\kappa}^{t_h+\lceil d\free_{t_h}\rceil} \Pr\left(\tilde{\del_t}=\kappa\text{ and }t\leq t'_h\right).
\]

Let $f\in(0,1)$ be a constant that only depends on $p$ to be chosen later on. We split $P$ in two parts $P_1$ and $P_2$ the following way, with the convention that the second sum is $0$ if $\lfloor fM/p\rfloor + 1 >\lceil d\free_{t_h}\rceil$:
\[
P \leq \underbrace{\sum_{t=t_h+\kappa}^{t_h+\lfloor fM/p\rfloor} \Pr\left(\tilde{\del_t}=\kappa\text{ and }t\leq t_h'\right)}_{P_1} + \underbrace{\sum_{t=t_h+\lfloor fM/p\rfloor+1}^{t_h+\lceil d\free_{t_h}\rceil} \Pr\left(\tilde{\del_t}=\kappa\text{ and }t\leq t_h'\right)}_{P_2}.
\]
A deletion is required for $\tilde\del_t$ to increases, hence
$\tilde\del_t \leq D_{t-t_h}$ where $D_k\sim\bin(k,1-p)$. For any $t$ such that $t_h+\kappa\leq t\leq t_h+ fM/p$ we have to perform at most $fM/p$ operations since time $t_h$, and we intuitively expect to have at most $\approx (1-p)fM/p$ deletions, which is smaller than $\kappa$ for sufficiently large $M$, as $f<1$. Formally, for sufficiently large $T$, $M$ is sufficiently large so that $\kappa = \lfloor \frac{1-p}{p}M \rfloor +1 >\frac{1-p}{p} \cdot \frac{1+f}2\cdot M$. Hence, for any $t$ such that $t_h+\kappa\leq t\leq t_h+ fM/p$,
\begin{align*}
\Pr\left(\tilde{\del_t}=\kappa\text{ and }t\leq t_h'\right)& \leq\Pr\left(\tilde\del_t =\kappa\right) \leq \Pr\left(\tilde\del_t > \frac{(1+f)(1-p)M}{2p} \right) \\
& \leq \Pr\left(\bin(t-t_h,1-p) >\frac{(1+f)(1-p)M}{2p} \right)\\
&\leq \Pr\left(\bin(\lfloor fM/p\rfloor,1-p) > \frac{(1+f)(1-p)M}{2p} \right).
\end{align*}
This quantity is exponentially small in $M$ by Hoeffding inequality (Proposition~\ref{prop:hoeffding-ineq}). Hence, $P_1$ is exponentially small in $M$, hence in $\del_{t_h}$, as a sum of a linear number of exponentially small quantities.

For $P_2$ we use the fact that if $\tilde\del_t+\tilde\used_t>M$ then a rehash was triggered and $t'_h>t$, which is therefore not possible. Since we are looking for a time $t$ where  $\tilde{\del_t}=\kappa$,  we weaken the condition to obtain
\[
P_2\leq \sum_{t=t_h+\lfloor fM/p\rfloor+1}^{t_h+\lceil d\free_{t_h}\rceil} \Pr\left(\tilde\used_t \leq M - \kappa \right)
\leq  \sum_{t=t_h+\lfloor fM/p\rfloor+1}^{t_h+\lceil d\free_{t_h}\rceil} 
\Pr\left(\tilde\used_t \leq \frac{2p-1}p\,M\right).
\]
Recall that $\tilde\used_{t_h} = \used_{t_h} > M/2$. By Lemma~\ref{lm:random increase}, after more than 
$fM/p$ additional operations, we have   $\tilde\used_t \geq M/2 + c\cdot fM/p$ with exponentially high probability in $M$, for any $c\in(0,2p-1)$. But if 
$\tilde\used_t \geq M/2 + c\cdot fM/p$ then
\[
\tilde\used_t - \frac{2p-1}p\,M \geq M\left(\frac12 + \frac{c\cdot f}p - \frac{2p-1}p\right). 
\]
As $c$ can be as close to $2p-1$ as needed and $f$ as close to $1$ as needed, we can choose them such that $cf-(2p-1)> -p/4$. Therefore, for such a choice of $c$ and $f$, we  have that $\Pr\left(\tilde\used_t \leq \frac{2p-1}p\,M\right)$ is exponentially small in $M$. Summing a linear number of an exponentially small probabilities yields  an exponentially small probability, hence $P_2$ is exponentially small in $M$, hence in $\free_{t_h}$, concluding the proof.
\end{proof}

Let $t_0=\lfloor\frac{1-p}{p}\free_{t_h}\rfloor < \free_{t_h}$. Between time $t_h+1$ and $t_h+t_0$, there are not enough operations to trigger a rehash or to empty the hashmap. 
As $\del_t$ increases by at most $1$ at each operation, we have
$\del_t\leq \frac{1-p}{p}\free_{t_h}$ for any $t$ such that $t_h\leq t \leq t_h+t_0$. Since the hashmap is more than half filled just after a rehash, we have $\free_{t_h} < \frac{\M}2$ and $\frac{p\del_{t}}M\leq \frac{1-p}2$, for $t_h\leq t\leq t_h+t_0$. Hence, until time $t_h+t_0$, we can bound from below the process $\del_t$ by a process that increases with probability $1-p$, decreases with probability $\frac12(1-p)$ and does not change otherwise. This observation yields the following result. 

\begin{lemma}\label{lm:t0}
Let  $t_0=\lfloor \frac{1-p}{p}\free_{t_h} \rfloor$. With exponentially high probability in $t_0$, we have $\del_{t_h+t_0}\geq \frac{(1-p)^2}{3p}\free_{t_h}$.
\end{lemma}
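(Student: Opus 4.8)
The plan is to bound the true deletion-count process $\del_t$ from below by a simpler random walk with independent increments, and then apply a concentration argument to that walk. As observed in the paragraph preceding the statement, for every $t$ with $t_h\le t\le t_h+t_0$ there are too few operations to trigger a rehash or to empty the table, so $\del_t$ genuinely follows the transition rule of Equation~\eqref{eq:insertions-deletions}; moreover the bound $\del_t\le\frac{1-p}{p}\free_{t_h}$ forces the decrease probability $\frac{p\del_t}{\M}$ to stay below $\frac{1-p}{2}$ throughout this window.

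First I would introduce a reference walk $\underline{\del}_t$ started at $\underline{\del}_{t_h}=0=\del_{t_h}$, whose increments are i.i.d., equal to $+1$ with probability $1-p$, to $-1$ with probability $\frac{1-p}{2}$, and to $0$ otherwise. A step-by-step coupling driven by a single uniform variate per step shows that the real walk---whose increase probability is exactly $1-p$ while its decrease probability never exceeds $\frac{1-p}{2}$ on this window---dominates $\underline{\del}_t$ pointwise. Hence $\del_{t_h+t_0}\ge\underline{\del}_{t_h+t_0}$ almost surely, and it suffices to control the reference walk.

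Next I would write $\underline{\del}_{t_h+t_0}=U-V$, where $U$ counts the $+1$ steps and $V$ the $-1$ steps among the $t_0$ increments, so that marginally $U\sim\bin(t_0,1-p)$ and $V\sim\bin(t_0,\frac{1-p}{2})$. The mean of $U-V$ is $\frac{1-p}{2}t_0$, which, up to the $\bigOh(1)$ error from the floor defining $t_0$, equals $\frac{(1-p)^2}{2p}\free_{t_h}$; the target $\frac{(1-p)^2}{3p}\free_{t_h}$ therefore lies below this mean by a quantity linear in $t_0$. I would then exploit the inclusion $\{U-V<a\}\subseteq\{U<u_0\}\cup\{V>v_0\}$, valid whenever $u_0-v_0\ge a$, selecting $u_0$ and $v_0$ at deviations of order $t_0$ below the mean of $U$ and above the mean of $V$ respectively. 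A union bound combined with the lower-tail and upper-tail forms of Hoeffding's inequality (Proposition~\ref{prop:hoeffding-ineq}) then bounds each term by $\exp(-ct_0)$ for some constant $c>0$, which is the claimed exponentially high probability in $t_0$.

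The calculations are routine; the two points that genuinely need care are the coupling that certifies the stochastic lower bound $\del_t\ge\underline{\del}_t$, and the fact that $U$ and $V$ are \emph{not} independent (they are counts from one multinomial experiment). The latter is precisely why I route the estimate through the union bound on $\{U<u_0\}$ and $\{V>v_0\}$, each a genuine binomial tail, rather than through a single deviation bound for $U-V$. Checking that the $\bigOh(1)$ floor correction together with the constant-factor gap between mean and target leaves the Hoeffding exponents strictly negative for all large $T$ is the final bookkeeping step.
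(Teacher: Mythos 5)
Your proposal is correct and follows essentially the same route as the paper's own proof: a stochastic lower bound by an i.i.d.\ walk that increases with probability $1-p$ and decreases with probability $\frac{1-p}{2}$, followed by writing its endpoint as a difference of two dependent binomials and controlling each tail separately via a union bound and Hoeffding's inequality. The only difference is that you spell out the coupling and the floor bookkeeping a bit more explicitly than the paper does.
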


\begin{proof}
Using the observations made before stating the lemma, the process of Equation~\eqref{eq:insertions-deletions} is stochastically lower bounded by the  process $\del_t^-$  defined by $\del^-_{t_h}=\del_{t_h}=0$ and for all $t\in\{t_h,\ldots,t_h+t_0\}$:
\begin{equation}\label{eq:insertions-deletions-lower-bound}
	\del^-_{t+1} = 
	\begin{cases}
		\del^-_t-1 & \text{with probability }\frac12(1-p) \\
		\del^-_t & \text{with probability }\frac12(3p-1)\\
		\del^-_t + 1 & \text{with probability }1-p.
	\end{cases}
\end{equation}
The random variable $\del^-_{t+1}$ is the difference of two (dependent) random variables, both following a binomial law. Since the expected increase at each operation is $\frac12(1-p)$ and there are $\lfloor\frac{1-p}p\free_{t_h}\rfloor$ operations, we get the result by applying Hoeffding Inequality (Proposition~\ref{prop:hoeffding-ineq}): if the difference deviates from the expectation of more than $\alpha$ times the expectation, at least one of the two random variables deviates from its expectation of more than $\alpha/2$ times its expectation, which is exponentially unlikely.
\end{proof}

From Lemma~\ref{lm:rehash before equilibrium}, with exponentially high probability the probability that the number of deleted elements decreases at any given step is smaller than the probability it increases. So $\del_t$ remains linear in $\free_{t_h}$ after time $t_h+t_0$ and until a rehash occurs, as formalized in the following statement.

\begin{lemma}\label{lm:tau}
For any $d>\frac1{2p-1}$,
with exponentially high probability w.r.t. $\free_{t_h}$ there is a rehash at some time $t'_h \leq t_h + d\free_{t_h}$. Moreover, there exists some constant $\gamma\in(0,1)$ such that, just before the rehash, $\del_{t'_h-1} \geq \gamma\, \free_{t_h}$.
\end{lemma}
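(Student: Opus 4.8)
The plan is to chain together the three preceding lemmas. The first assertion is essentially part~(i) of Lemma~\ref{lm:rehash before equilibrium}: picking a constant $d_0$ with $\frac1{2p-1}<d_0<d$, that lemma gives, with exponentially high probability in $\free_{t_h}$, a rehash at some time $t_h'\leq t_h+\lceil d_0\free_{t_h}\rceil$, and once $\free_{t_h}$ is large $\lceil d_0\free_{t_h}\rceil\leq d\free_{t_h}$, so $t_h'\leq t_h+d\free_{t_h}$. The substance of the lemma is therefore the lower bound $\del_{t_h'-1}\geq\gamma\free_{t_h}$, for which I would take $\gamma:=\frac{(1-p)^2}{6p}\in(0,1)$. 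With this choice $\gamma\free_{t_h}$ is exactly half of the level $\frac{(1-p)^2}{3p}\free_{t_h}=2\gamma\free_{t_h}$ that Lemma~\ref{lm:t0} reaches at time $t_h+t_0$, where $t_0=\lfloor\frac{1-p}p\free_{t_h}\rfloor$. The picture is two-phase: during the first $t_0$ operations $\del$ climbs to a level $\Theta(\free_{t_h})$, and from then until the rehash it has no downward drift, so it cannot give back more than a lower-order amount before $t_h'$.

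To make the second phase precise I would combine two facts. First, since $t_0<\free_{t_h}$, the discussion preceding Lemma~\ref{lm:t0} shows no rehash can happen before $t_h+t_0$, so $t_h+t_0<t_h'$, and Lemma~\ref{lm:t0} gives $\del_{t_h+t_0}\geq 2\gamma\free_{t_h}$ with exponentially high probability in $\free_{t_h}$ (note $t_0=\Theta(\free_{t_h})$). Second, on the good event of Lemma~\ref{lm:rehash before equilibrium}, its part~(ii) ensures $\frac{p\del_t}M\leq 1-p$ for all $t$ with $t_h\leq t\leq t_h'$, i.e.\ the probability that $\del$ decreases in Equation~\eqref{eq:insertions-deletions} never exceeds the probability $1-p$ that it increases. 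The key step is then a monotone coupling: driving both chains by one uniform variate per step, on this good event and for $t_h+t_0\leq t\leq t_h'$ the process $\del_t$ dominates pointwise a lazy symmetric walk $\hat{\del}_t$ started at $\del_{t_h+t_0}$ that increases, stays, and decreases with probabilities $1-p$, $2p-1$, $1-p$. The domination holds precisely because the decrease probability of $\del$ is at most $1-p$ while its increase probability equals $1-p$. As $\hat{\del}$ is centered, its displacement is a difference of two binomials, so a union bound over the $\Theta(\free_{t_h})$ times of the window $[t_h+t_0,\,t_h+\lceil d_0\free_{t_h}\rceil]$ — which is nonempty because $\frac1{2p-1}>\frac{1-p}p$ — combined with Hoeffding's inequality (Proposition~\ref{prop:hoeffding-ineq}) exactly as in the proof of Lemma~\ref{lm:t0}, bounds the probability that $\hat{\del}$ ever drops below $\gamma\free_{t_h}$ by $O(\free_{t_h})\exp(-\Theta(\free_{t_h}))$: the gap to cross is of order $\free_{t_h}$ while the fluctuations over a window of length $\Theta(\free_{t_h})$ are only of order $\sqrt{\free_{t_h}}$. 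Intersecting the good events of Lemmas~\ref{lm:rehash before equilibrium} and~\ref{lm:t0} with this event yields $\del_t\geq\gamma\free_{t_h}$ for every $t$ in the window, and in particular at $t=t_h'-1$, which lies in the window since $t_h+t_0<t_h'\leq t_h+\lceil d_0\free_{t_h}\rceil$.

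The main obstacle is that $t_h'$ is a random stopping time, so a single concentration bound at a fixed deadline does not suffice; I must control $\del_t$ \emph{uniformly} over the whole window in order for the conclusion to survive at the random index $t_h'-1$. This is what forces the union bound over all times of the window, and it is also why the argument is routed through the auxiliary walk $\hat{\del}$, whose transition probabilities are constant and hence decoupled from the stopped trajectory of $\del$. A secondary technicality is to justify the monotone coupling cleanly on the event where Lemma~\ref{lm:rehash before equilibrium}(ii) holds; off that event one simply absorbs the exponentially small failure probability into the final bound.
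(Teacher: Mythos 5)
Your proof is correct and follows essentially the same route as the paper's: after time $t_h+t_0$ the process $\del_t$ is stochastically lower bounded by a lazy symmetric walk with up/down probabilities $1-p$ (justified via Lemma~\ref{lm:rehash before equilibrium}(ii)), whose centered displacement is a difference of binomials controlled by Hoeffding's inequality, starting from the linear level guaranteed by Lemma~\ref{lm:t0}. If anything, your explicit union bound over the window $[t_h+t_0,\,t_h+\lceil d_0\free_{t_h}\rceil]$ to handle the random stopping time $t'_h$ is more careful than the paper's one-line version of that step, and the different choice of $\gamma$ ($\tfrac{(1-p)^2}{6p}$ versus the paper's $\tfrac{(1-p)^2}{4p}$) is immaterial since both leave linear room.
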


\begin{proof}
The evolution of $\del_t$ is lower bounded by a process that increases and decreases with the same probability, namely, $1-p$. So after $t'_h-t_0-t_h$ new operations made after time $t_0$, 
$\del_t-\del_{t_0+t_h}$ is stochastically lower bounded by the difference of two dependent random variable that both follow a binomial distribution of parameters $ t'_h-t_0-t_h$ and $1-p$. The expectation is $0$, so we conclude using Hoeffding Inequality (Proposition~\ref{prop:hoeffding-ineq})  and take $\gamma = \frac{(1-p)^2}{4p}$ to get some linear room and have an exponentially small probability that the property does not hold.
\end{proof}

We now have all the ingredients to prove our main theorem.

\begin{proof}[Proof of Theorem~\ref{thm:main proba hashmap}] with exponentially high probability in $T$, the hashmap reaches the capacity $\M$. When it is the first time it reaches this size, and as we only insert elements one at a time,  the hashmap contains $\M/2+1$ used cells and $\free_0 := \M/2 - 1$ free cells. By Lemma~\ref{lm:tau}, after some time another rehash occurs, with at least $\gamma\, \free_0$ deleted cells just before rehashing. So the newly created hashmap has capacity $M$ (it is exponentially unlikely to decrease) and contains at least $\gamma\, \free_0$ empty cells. 
Then we apply  Lemma~\ref{lm:tau} again, and there is another rehash into a hashmap of capacity $\M$, with at least $\gamma^2\,\free_0$ empty cells, etc. We continue while $\gamma^i\, \free_0\geq \sqrt{T}$, that is, a logarithmic (in $T$) number of times. At each rehash we have to insert all the used keys, and there are a linear number of them, so it globally costs $\Omega(T\log T)$ calls to the insertion function: $\Omega(\log T)$ rehashes each costing $\Theta(T)$ calls. It is very likely that such a sequence of rehashes will occur, since at some point we reach a capacity of $2M$, with exponentially high probability, by Lemma~\ref{lm:random increase}. When we sum the probabilities of error using the union bound, we sum a logarithmic number of error terms, which are all in $\bigOh(\exp{\small(-c\sqrt{T}\small)})$, so it is super-polynomially unlikely  that this scenario does not happen. This concludes the proof.
\end{proof}

\section{Hybrid tables in \Lua} 

\label{sec:hybrid-tables}

Recall that when keys are integers, \lua stores their values in the array part of the hybrid table~\cite{IFF05,LuaGems}. 
The array-part corresponds to a range $[1,2^a]$ of keys, or $\emptyset$ at the beginning.
To avoid wasting memory, \lua makes sure that more than half of the keys are being used
at any one time. When associating a value to a key into the table, if the key is an integer within the range of
the array-part, the value is simply inserted there. Otherwise, the pair key/value is inserted into the hashmap as explained in Section~\ref{subsec:hashmaps-algorithms}.
If the insertion into the hashmap provokes a rehash, 
we first compute the largest $a'\geq 0$ such that $[1,2^{a'}]$ contains
at least $2^{a'-1}+1$ keys from the hybrid table.\footnote{This is done in linear time by counting the number of integer keys between $2^{\ell-1}$ and $2^{\ell}$ for each $\ell$, for $1\leq 2^\ell\leq M$. } Then $A'=2^{a'}$ will be the new size
of the array-part, and the values of the keys within its range are placed there.   The rest of the
keys are placed in the hashmap, which has size $\M=2^\m$, where $\m$ is chosen
so that the total number of elements it contains is between $2^{\m-1}$ (strictly) and  $2^\m$. The insertions after the rehash are performed in the order of their position in the previous hashmap, each key going either to the hash-part or to the array-part if it is an integer smaller than or equal to $A'$. This process is exemplified in Figure~\ref{fig:example rebuild}.

\begin{figure}[t]
	\begin{tikzpicture}[xscale=1.05,yscale=.8]
		\draw[thick] (0,0) rectangle (4,1);
		\foreach \x in {0,1,...,3} 
		{
			\draw[thick] (\x,0) -- (\x,1);
		}
		\foreach \x/\y in {0/5\mapsto a, 1/9\mapsto s, 2/11\mapsto g, 3/7 \mapsto b} {
			\node at (\x+0.5, .5) {\scriptsize $\y$};
		}
		\node at (-1,.5) {\small hashpart};
		
		\draw[thick] (6.5,0) rectangle (10.5,1);
		\foreach \x in {1,2,3,4} 
		{
			\draw[thick] (\x+6.5,0) -- (\x+6.5,1);
			\node at (\x+6, 1.2) {\scriptsize \x};
		}
		\foreach \x/\y in {1/p, 2/c, 4/c} {
			\node at (\x+6, .5) {\small $\y$};
		}
		\node at (5.5,.5) {\small arraypart};

		\draw[blue,->,very thick](4.5, -.2) -- node[right,align=left] {\small adding $12\mapsto f$\\\small triggering a rehash}(4.5, -1.8);

		\draw[thick] (0,-2) rectangle (4,-3);
		\foreach \x in {0,1,...,3} 
		{
			\draw[thick] (\x,-2) -- (\x,-3);
		}
		\foreach \x/\y in {0/12\mapsto f, 1/9\mapsto s,2/, 3/11 \mapsto g} {
			\node at (\x+0.5, -2.5) {\scriptsize $\y$};
		}
		\node at (-1,-2.5) {\small hashpart};

		\draw[thick] (0,-3.5) rectangle (8,-4.5);
		\foreach \x in {1,...,8} 
		{
			\draw[thick] (\x,-3.5) -- (\x,-4.5);
			\node at (\x-.5, -3.3) {\scriptsize\x};
		}
		\foreach \x/\y in {1/p, 2/c, 4/c, 5/a, 7/b} {
			\node at (\x-.5, -4) {\small $\y$};
		}

		\node at (-1,-4) {\small arraypart};

	\end{tikzpicture}

    \caption{
    \label{fig:example rebuild}
    Example of a \Lua table, with its hash-part and array-part. On the top, we have the initial state of the table, with the hash-part full. This is a valid configuration, it can be obtained in \Lua from an empty table if we insert the keys in the order $1,2,4,11,9,7,5$. Inserting $12\mapsto f$ induces a rehash. Then the size of the array-part is recomputed, choosing the largest interval of keys $[1,2^{a^\prime}]$ so that it will be more than half-full. The situation after the rehash can be seen at the bottom.
    }
\end{figure}

\subsection{Settings for the analysis}
\label{subsec:settings-analysis-array}
In all the following analysis of \lua hybrid tables, we consider that only insertions of pairs key/value are performed. This setting is  sufficient to exhibit some unfortunate behavior in natural models, and it can only become worse if we also allow deletions. Rather than being interested
in what happens between two rehashes, as in the previous section, we study what happens when a rehash occurs.

We consider a sequence of $n$ insertions ${\boldsymbol{y}}=(y_1,\ldots,y_n)$ of integers (keys)
into an initially empty \lua table.
We write $t_0,t_1,t_2,\ldots,t_\ell$ for the sequence of rehash times, setting $t_0=0$,
letting $t_i$ be the time of the \nth{i} rehash. Let $\ell=\ell(\boldsymbol{y})$ be the total number of rehashes.
More precisely, the insertion of $y_{t_i}$ induces the \nth{i} rehash.
Denote by $\beta_i$
the size of the hashmap after the \nth{i} rehash. 

In the process, the cost introduced by the insertions of elements in the array-part
is small: as we only consider insertions in this section, the size of the array-part can only increase, and the amortized cost of an insertion in such a dynamic array is $\bigOh(1)$, so the overall cost is $\bigOh(n)$. Hence, the cost of interest for us is the one induced by the rehashes, and we denote by $C$ the \emph{cost} defined by
$C:=\sum \beta_i$.
This exactly estimates the over-cost induced by rehashes, and is our main parameter of study in this section. 
It is an accurate estimation of the total number of calls to the \lua insertion function, up to a multiplicative constant.

\subsection{Rehashing into the array-part}
\label{subsec:rehash-arrays}
In this section, we show a simple example of how the hybrid mechanism
may lead to super-linear costs $C=\Omega(n \log n)$. Moreover, we prove that $\bigOh(n\log n)$ 
is the worst case possible for a sequence $n$ insertions into an
initially empty \lua table.

\begin{example} 
\label{example:hybrid-1}
Consider inserting
$(-(2^k-1),-(2^k-2),\ldots,0,1,2,\ldots,2^k)$ into an empty \lua table.
We claim that this induces exactly $k$ rehashes, in which we systematically 
reinsert $2^k$ entries into a renewed hash-part of size $2^k$.

This is proved as follows. Non-positive integers go  into the hash-part of the table. Thus $(-(2^k-1),-(2^k-2),\ldots,0$ go into the hashmap, which is then full and of size $\M=2^k$. Then inserting $1$, induces a rehash. However, the key $1$ goes immediately to the array-part. Continuing, we rehash and double the size of the array-part when inserting $1,2,3,5,9,\ldots$, in short for $1$ and $(2^i+1)_{i=0}^{i=k-1}$. The remaining positive keys go directly into a free spot of the array-part and do not induce a rehash.
\end{example}

It is therefore possible for a sequence
of $n=2^{k+1}$ integers to yield a cost $C = \Omega(n\log n)$. This is the worst  possible case:
\begin{restatable}{proposition}{propworstcase}\label{worst-case-permutations-2}
When performing $n$ insertions into an initially empty \lua table, the insertion function 
is called $\bigOh(n \log n)$ times in the worst-case.
\end{restatable}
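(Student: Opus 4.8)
The plan is to bound directly the quantity $C=\sum_{i=1}^{\ell}\beta_i$, since the number of calls to the insertion function is $\bigOh(n+C)$: each of the $n$ insertions triggers at most one call, and the $i$-th rehash re-inserts $h_i$ keys into the hash-part, where $h_i$ denotes that number and satisfies $\beta_i/2<h_i\le\beta_i$ by the sizing rule. The whole argument rests on a structural dichotomy. Let $A_i$ be the size of the array-part right after the $i$-th rehash. Because we only insert, the number of integer keys lying in any fixed range $[1,A]$ is non-decreasing, so the recomputed array range never shrinks, i.e.\ $A_{i-1}\le A_i$. I would first prove the key claim: at each rehash, either the array-part grows strictly ($A_i>A_{i-1}$; call this \emph{type A}), or it is unchanged, in which case the hash-part size \emph{exactly doubles}, $\beta_i=2\beta_{i-1}$ (\emph{type B}).

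To prove the claim, observe that right after a rehash with range $[1,A_{i-1}]$ every integer key $\le A_{i-1}$ sits in the array-part, and during the subsequent phase any freshly inserted integer key $\le A_{i-1}$ goes directly to the array; hence the hash-part never contains an integer key in $[1,A_{i-1}]$. Consequently the only keys that can migrate from hash to array at the next rehash are integers in $(A_{i-1},A_i]$, and if $A_i=A_{i-1}$ none of them migrate. In that case all $\beta_{i-1}+1$ keys (the full old hash-part together with the rehash-triggering key, which is necessarily hash-bound, since it is the insertion that failed to find a free slot) are reinserted into the hash-part, so $h_i=\beta_{i-1}+1$, and the rule $2^{m-1}<h_i\le 2^m$ forces $\beta_i=2\beta_{i-1}$.

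I would then bound the two types separately. For type A: the sequence $(A_i)$ is non-decreasing, takes only power-of-two values, and is bounded by $2n$ (the array is more than half full, so $A_i/2<n$); therefore it increases strictly at most $\bigOh(\log n)$ times, giving $\bigOh(\log n)$ type-A rehashes, each of cost $\beta_i<2n$, for a total contribution $\bigOh(n\log n)$. For type B: after such a rehash the hash-part holds $h_i=\beta_{i-1}+1=\beta_i/2+1$ keys, leaving $\beta_i/2-1$ free slots; as there are no deletions, the next rehash cannot occur until those slots are filled, which requires at least $\beta_i/2$ further hash-bound insertions. Charging every non-final type-B rehash to the block of insertions of its following phase, and using that these phases are disjoint and partition the $n$ insertions, yields $\sum_{\text{type B}}\beta_i/2\le n$, so the type-B contribution is $\bigOh(n)$. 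Adding the two bounds gives $C=\bigOh(n\log n)$, and hence $\bigOh(n\log n)$ calls.

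The main obstacle is the structural dichotomy, and specifically the invariant that no integer key within the current array range can ever reside in the hash-part, which is exactly what forces a genuine doubling whenever the array fails to grow; once this invariant is established the remaining estimates are elementary counting. The only care required beyond that concerns boundary cases, namely the initial (empty or tiny) table and the final rehash, each of which can simply be absorbed into an additive $\bigOh(n)$ term.
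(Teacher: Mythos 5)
Your proof is correct, and while it shares the paper's overall decomposition, it replaces the paper's key counting step with a genuinely different argument. Both proofs split the rehashes into those where the array-part grows (at most $\bigOh(\log n)$ of them, each costing $\bigOh(n)$, since array sizes are powers of two bounded by $2n$) and the rest, and the interesting work is in showing that the rest contribute only $\bigOh(n)$. The paper does this by writing $\sum_{i:\beta_i>\beta_{i-1}}\beta_i$ as twice a telescoping sum of increments $\beta_{j+1}-\beta_j$, and then controlling the negative increments by the monotone growth of the array-part (each key that leaves the hash-part for the array-part never returns, so the total decrease is at most $n$). You instead first establish the invariant that the hash-part never contains an integer key inside the current array range, deduce the clean dichotomy that a rehash which does not grow the array must re-insert all $\beta_{i-1}+1$ hash keys and hence exactly double the hash-part, and then run the classical dynamic-array charging argument: such a rehash leaves $\beta_i/2-1$ free slots, so the next rehash is preceded by at least $\beta_i/2$ fresh hash-bound insertions, and since these phases are disjoint the total type-B cost is at most $4n$. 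Your route buys a sharper structural picture (exact doubling, and a potential/charging argument that is standard and local in time), at the price of having to justify the invariant; the paper's telescoping argument avoids the invariant but needs the slightly less transparent bookkeeping via $\alpha_{j+2}-\alpha_{j+1}$ and Remark~\ref{remark:rehash-times2}. The boundary cases you mention (the initial empty table and the final rehash) are indeed harmless and absorbed into $\bigOh(n)$, exactly as you say.
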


We recall that we denote by $t_0,t_1,t_2,\ldots,t_\ell$ the sequence of rehash times, setting $t_0=0$,
letting $t_i$ be the time of the \nth{i} rehash, and $\ell=\ell(\boldsymbol{y})$ be the total number of rehashes.
Thus, the insertion of $y_{t_i}$ induces the \nth{i} rehash.
We denote by $\beta_i$
the size of the hashmap after the \nth{i} rehash, and let $\alpha_j$ be the
number of elements in the array-part right before performing the \nth{j} rehash. We also denote by $\alpha_{\ell+1}$ the final number of elements in the array-part, as there
is no \nth{(\ell+1)} rehash, and we set $t_{\ell+1}:=\alpha_{\ell+1}+\beta_\ell +1$.
\begin{remark} \label{remark:rehash-times2}
With this notation, the rehash times satisfy $\alpha_{j+1}+\beta_j+1=t_{j+1}$, for all $j$ such that $0\leq j\leq \ell$.
\end{remark}
\begin{remark}
    Note that when considering only insertions, at each rehash, either the size (capacity) of the array-part increases
    or of the hash-part increases. Indeed, when an element triggers
    a rehash, it cannot go into the array-part (else there would be no rehash) and the hash-part must be full. If the hash-part does not increase in size, then the array-part has to.
\end{remark}

\begin{proof}[Proof of Proposition~\ref{worst-case-permutations-2}]
We want to show that the cost $C:=\sum_{i=0}^\ell \beta_i$ is $\bigOh(n \log n)$. For this, we distinguish the $\beta_i$'s that correspond to an increase of the array-part size from the other ones.
 
When the array-part increases in size, at the very least it doubles in size. Hence, the total number of rehashes in which the array-part increases is $\bigOh(\log n)$. The hash-part does not increase when the array-part does, as elements are added one at a time. Since $\beta_i \leq 2n$ for all $i$,
the overall contribution of these $\beta_i$'s to $C$ is at most 
$\bigOh(n\log n)$.

We now show that the remaining rehashes only contribute a total of $\bigOh(n)$.  Consider the values of $i$ for which the hash-part doubles (with $1\leq i \leq \ell$), that is, such that $\beta_{i}=2\beta_{i-1}$. Then $\beta_i=2(\beta_{i}-\beta_{i-1})$ we have $\sum_{i : \beta_{i}>\beta_{i-1}} \beta_i = 2\sum_{j : \beta_{j+1}>\beta_j} (\beta_{j+1}-\beta_{j}) $. Observe that
$
\sum_{j : \beta_{j+1}>\beta_j} (\beta_{j+1}-\beta_j) = \sum_{j=0}^{\ell-1} (\beta_{j+1}-\beta_j) - \sum_{j : \beta_{j+1}\leq \beta_j} (\beta_{j+1}-\beta_j)\,,
$
and $\sum_{j=0}^{\ell-1} (\beta_{j+1}-\beta_j)=\beta_\ell \leq 2n$.

To conclude, we remark that  $\sum_{j : \beta_{j+1}\leq \beta_j} |\beta_{j+1}-\beta_j| \leq n $. This is because the keys leaving the hash-part go into the array-part once and never come back.  We make this connection formal. When rehashing we have $\alpha_{j+1}+\beta_j+1=t_{j+1}$ (see Remark~\ref{remark:rehash-times2}), thus $\beta_j-\beta_{j+1} = \alpha_{j+2}-\alpha_{j+1} + (t_{j+1}-t_{j+2}) < \alpha_{j+2}-\alpha_{j+1}\,.$ Then, as $\alpha_j$ is non-decreasing,  $|\beta_{j+1}-\beta_j|\leq \alpha_{j+2}-\alpha_{j+1}$ when $\beta_{j+1}\leq \beta_j$. Again using that $\alpha_j$ is non-decreasing we obtain \[\sum_{j : \beta_{j+1}\leq \beta_j}|\beta_{j+1}-\beta_j|  \leq \sum_{j : \beta_{j+1}\leq \beta_j}(\alpha_{j+2}-\alpha_{j+1}) \leq  \sum_{j=0}^{\ell-1} (\alpha_{j+2}-\alpha_{j+1}) = \alpha_{\ell+1}-\alpha_1 \leq n \,. \]

Therefore $\sum_{i : \beta_{i}>\beta_{i-1}} \beta_i \leq 6n$, concluding the proof.
\end{proof}

\begin{remark} \label{rem:rehashes-hash-increases}
    The proof of \autoref{worst-case-permutations-2} shows that the rehashes in which the hash-part increases have a global contribution of only $\bigOh(n)$, while the others contribute for at most $\bigOh(n \cdot \log (1+m))$, where $m$ is the maximum size of the array part.
\end{remark}

\begin{remark}
If some of the $n$ inserted keys are not positive integers, we can refine the result.
Let $n^\prime$ be the total number of positive integer keys. Then the worst case for $C$ is $\bigOh(n+n\log (1+n^\prime))$, as long as only insertions are performed.
\end{remark}

\subsection{Inserting permutations in \Lua Tables}
\label{subsec:inserting-permutations}

In Example~\ref{example:hybrid-1} we showed a sequence of $n$ insertions of integers yielding a cost $C=\Theta(n\log n)$. Some of the keys were negative integers,
so they could only ever be stored in the hash-part. In this subsection, we show that this worst case is still attainable on a very natural setting involving only positive integers: the sequence ${\boldsymbol{y}}$ is a permutation of $[n]:=\{1,\ldots,n\}$. This setting, a priori, gives the array-part the best possible chance of being exploited, while not repeating keys. We present
both the worst-case (this subsection) and the case of a random permutation (Subsection~\ref{subsec:analysis-insertion-random-permutations}). Though described in terms of permutations, these settings apply whenever the keys are consecutive integers that do not necessarily appear in increasing order. It is the case for instance during the marking of the transversal of a graph of vertex set $[n]$.

\begin{restatable}{proposition}{permutationsworst}\label{worst-case-permutations-1}
Inserting $n$ elements given by the order of a permutation $\pi$ of  $[n]$  requires $\Omega(n \log n)$ calls to the insertion function of \lua in worst-case.
\end{restatable}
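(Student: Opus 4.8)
The plan is to exhibit an explicit permutation $\pi$ of $[n]$ that forces $\Omega(\log n)$ rehashes, each of which re-inserts $\Theta(n)$ elements into the hash-part, thereby yielding a total cost $C = \Omega(n \log n)$. The guiding idea is that in Example~\ref{example:hybrid-1} the super-linear behavior came from keys that sat in the hash-part while the array-part slowly grew, repeatedly triggering rehashes of a full hash-part. With only positive integers available, I must engineer a permutation so that, at each rehash, a large block of keys is too big to belong to the current array-range $[1,2^{a'}]$ and must therefore live in the hash-part, while the array-part is forced to grow only geometrically. The natural construction is to insert keys in an order that keeps the array-part ``barely more than half full'' so that $a'$ increases by just one at each rehash, while simultaneously keeping a linear-sized pool of large keys parked in the hash-part.

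Concretely, I would first insert all the large keys (say those in $(n/2, n]$, or some constant fraction of the top of the range) so that they fill the hash-part: since none of them can be placed in an array-part of range $[1,2^{a'}]$ with $2^{a'} \le n/2$, they are all stored in the hashmap, making it full of size $\M = \Theta(n)$. Then I would insert the small keys $1, 2, 3, \ldots$ in the careful pattern of Example~\ref{example:hybrid-1}, namely feeding them just fast enough that each new power-of-two threshold $2^{a'}$ triggers a rehash; at each such rehash the recomputation of the array-size moves only a doubling block of small keys out of the hash-part, so the large-key pool of size $\Theta(n)$ remains in the hash-part and gets re-inserted every single time. There are $\Theta(\log n)$ such thresholds as the small index sweeps from $1$ up to $\Theta(n)$, and each rehash costs $\Theta(n)$ re-insertions because $\beta_i = \Theta(n)$ throughout. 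Summing gives $C = \sum \beta_i = \Omega(n \log n)$, and since $C$ estimates the number of calls to the insertion function up to a constant (as established in Subsection~\ref{subsec:settings-analysis-array}), this is exactly the claimed lower bound.

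The main obstacle will be verifying precisely that the array-resizing rule does what I want at each rehash: recall that on a rehash \Lua{} chooses the largest $a'$ such that $[1,2^{a'}]$ contains at least $2^{a'-1}+1$ keys from the table. I must therefore control, at every rehash time, exactly which small keys have been inserted so far and check that the half-full threshold for $[1,2^{a'}]$ is met for the intended $a'$ but \emph{not} for $a'+1$, so that the array-part grows by precisely one doubling and the large keys never qualify to enter it. This requires a careful accounting of the interleaving: I need to insert enough small keys densely in the low range to satisfy the majority condition up to $2^{a'}$, yet leave the range $(2^{a'}, 2^{a'+1}]$ sufficiently sparse so that $a'+1$ is rejected, all while the insertions of small keys that fall within the current array-range do \emph{not} themselves trigger rehashes (they go straight into free array slots). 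Handling the boundary bookkeeping, and confirming that the hash-part size $\beta_i$ genuinely stays $\Theta(n)$ rather than shrinking once small keys migrate out, is the delicate part; the asymptotic count of $\Theta(\log n)$ rehashes and the $\Theta(n)$ cost per rehash then follow routinely, in direct analogy with the computation in Example~\ref{example:hybrid-1}.
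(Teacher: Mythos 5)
Your strategy is exactly the paper's: park a linear-sized pool of ``large'' keys in the hash-part, then feed in the small keys $1,2,3,\dots$ in increasing order so that each power-of-two threshold forces a rehash of a still-full hash-part, giving $\Theta(\log n)$ rehashes of cost $\Theta(n)$ each. The small-key schedule you worry about is not actually delicate --- inserting them in plain increasing order automatically makes $[1,2^{j+1}]$ eligible exactly when key $2^j+1$ arrives, and never makes $[1,2^{j+2}]$ eligible --- so that part of your plan goes through.

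The genuine gap is in the one concrete parameter you commit to: taking the large-key pool to be $(n/2,\,n]$ fails. The array-resizing rule counts \emph{all} integer keys of the table in the candidate range, including the large keys themselves. With $n=2^k$ your pool is the $2^{k-1}$ keys of $(2^{k-1},2^k]$, all lying in $[1,2^k]$; the moment key $1$ triggers the first rehash the table holds $2^{k-1}+1$ keys in $[1,2^k]$, which meets the threshold $2^{k-1}+1$, so \Lua sets $a'=k$ and the \emph{entire} pool migrates into the array-part. The hash-part empties, no further rehashes occur, and the total cost is $O(n)$. This is precisely why the paper takes $n=3\cdot 2^k$ and uses only the top \emph{third} $(2^{k+1},\,3\cdot 2^k]$ as the pool: for those keys to enter the array-part the range must reach $[1,4\cdot 2^k]$, which requires $2^{k+1}+1$ keys present, i.e.\ $t>2^{k+1}$ --- and by then the small keys (which only go up to $2^{k+1}$) have all landed in the array directly and no rehash is ever triggered again. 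So the pool of size $2^k$ stays in the hash-part through all $\Theta(k)$ rehashes, and the $\Omega(n\log n)$ bound follows. Your fallback phrase ``some constant fraction of the top of the range'' can indeed be made to work, but the fraction and its position relative to the power-of-two boundaries must be chosen so that the pool never satisfies the majority condition for any range containing it; that is the missing piece of your argument.
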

\begin{proof}
Consider first $n=3\times 2^k$ for some $k>0$. Define the permutation
\[
 \pi = \left(\begin{smallmatrix} 
1 & 2 & \cdots & 2^k & 2^k+1 &2^k+2 &\ldots & 3\cdot 2^k \\
2\cdot 2^k + 1 & 2\cdot 2^k+2 & \ldots & 3\cdot 2^k & 1 & 2& \ldots & 2\cdot 2^k
\end{smallmatrix}\right) \,.
\]

Notice that $2\cdot 2^k+1,\ldots 3\cdot 2^k$ cannot be on the array-part, unless the keys of the array-part contain the range $[1,\ldots,4\cdot 2^k]$. For this to happen more than half of the elements of that range must have been inserted, and this only happens after a time $t$ such that $\pi(t)=2^k$, so at time $t=2^{k+1}$.
Thus, after inserting $\pi(1),\ldots,\pi(2^k)$, the hash-part has size $\M=2^k$ and it is full. Inserting $\pi(2^k+1)=1$ induces a rehash, but $1$ goes into the array-part. Similarly, for $\pi(2^k+2)$, and more generally the same is true for the insertion of $\pi(2^k+2^j+1)$ as long as $j < k$. The rest are inserted directly into the array-part. For example, the insertion of $4=\pi(2^k+4)$ does not produce a rehash as $4$ goes into the array-part of range $[1,4]$. Each of the $k$ rehashes we have just described  costs at least $2^k$ because all the elements $\pi(1),\ldots,\pi(2^k)$ are still in the hash-part.
Hence, we obtain a cost that is $\Omega(2^k \times k) = \Omega(n  \log n)$.

For general $n$, we  pick the largest integer $k$ such that $m=3\times 2^k \leq n$, and complete the permutation $\pi$ above by setting $\pi(i)=i$ for $i\geq m+1$.
\end{proof}

\subsection{Average case for insertion of permutations}
\label{subsec:analysis-insertion-random-permutations}
The average case for permutations is almost linear, but not because of the wrong reasons. Our main result of this section, Theorem~\ref{thm:average-case-permutations2}, tells us that, essentially for any {\em super-linear} function $h(n)$,  the cost $C$ of a random permutation is $\bigOh(h(n))$ with probability
tending to $1$. There is a caveat with the choice of $n$: the theorem does not
apply if $n\to\infty$ approximates powers of two from above. Theorem~\ref{thm:average-case-permutations2} enforces this restriction 
by introducing subsets $\mathbb{N}_b:=\bigcup_{j=0}^\infty \{n : 2^j b < n \leq 2^{j+1}\}$, parameterized by $b\in (1,2)$.
This restriction is technical, but not restrictive for our purpose as it still applies to most natural choices of $n$, e.g., powers of two $n=2^k$. We suspect that such a restriction might be necessary; we are not sure if the conclusions apply to the sub-sequence $n=2^k+1$, and the arguments become too involved for the main topic of this article to justify an in-depth study here.

Even though inserting the keys of $[n]$ in a permuted order is essentially done almost linear time, we prove that the array-part is not really exploited as it should. For a uniform random permutation, with high probability, 
the corresponding \lua table does not have an array-part until the very end. This can be seen in Lemma~\ref{lemma:half-set-permutation-2nd-moment} below. Thus, in this scenario, we do not really take advantage of the hybrid data structure.

\begin{theorem} \label{thm:average-case-permutations2}
Let $g\colon \mathbb{N}\to \mathbb{R}_{>0}$ be an arbitrary function
satisfying $g(n)\to\infty$, with $g(n)=o({n})$ as $n\to\infty$. Fix an arbitrary $b\in (1,2)$ and let $n\to\infty$ over $\mathbb{N}_b$. Then, with probability tending to $1$ as $n\to\infty$, the number of calls to the insertion function, starting from an empty table, to build the \lua table for a uniform random permutation of $[n]$ is $\bigOh(n g(n))$.
\end{theorem}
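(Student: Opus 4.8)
The plan is to prove the sharper statement that, on an event of probability tending to $1$, the whole process makes only $\bigOh(n)$ calls; since $g(n)\to\infty$ this is more than enough for the claimed $\bigOh(n g(n))$ bound, and the spare $g(n)$ factor is precisely the room a robust treatment of the boundary needs. Write $S_t\subseteq[n]$ for the (uniformly random) set of keys present after $t$ insertions, and let $2^j$ be the largest power of two strictly below $n$; the hypothesis $n\in\mathbb{N}_b$ is equivalent to $n/2\le 2^j<n/b$, so $2^j$ sits a fixed multiplicative factor below $n$ and $2^{j+1}\ge n$. The backbone of the argument is one deterministic fact: the block $[1,2^{j+1}]$ contains every key of $[n]$, hence holds exactly $t$ keys, and is more than half full as soon as $t>2^j$. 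Thus the first rehash occurring at a time $t>2^j$ necessarily resizes the array-part to cover $[1,2^{j+1}]$, sends every present key to the array, empties the hash-part, and --- because all future keys are $\le n\le 2^{j+1}$ --- leaves the hash-part empty forever. So no rehash happens once $t>2^j$ apart from this single ``collapse'', and the analysis reduces to the regime $t\le 2^j$.

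The key structural input is that, while $t\le 2^j$, the array-part stays negligible with high probability. Since only insertions occur, $|S_t\cap[1,2^a]|$ is non-decreasing in $t$, so the array size is non-decreasing and it suffices to control it at the single time $t=n/2$; this is exactly what the second-moment estimate of Lemma~\ref{lemma:half-set-permutation-2nd-moment} provides, and the mechanism is transparent. Indeed the last rehash of this regime --- the doubling that brings the hash-part to capacity $2^j$ --- occurs at $t\approx 2^{j-1}=n/(2c)$ with $c=n/2^j\in(b,2]$, so the load $t/n$ is at most $\tfrac1{2b}+o(1)$, bounded below $\tfrac12$ precisely because $b>1$. For each block the mean occupancy $\mathbb{E}\,|S_t\cap[1,2^a]|=t\,2^a/n\le \tfrac{2^{a-1}}{c}(1+o(1))$ then falls short of the half-full threshold $2^{a-1}$ by at least $2^{a-1}\bigl(1-\tfrac1c\bigr)(1-o(1))$, a constant fraction of $2^a$; Hoeffding's inequality (Proposition~\ref{prop:hoeffding-ineq}) makes the probability that $[1,2^a]$ is more than half full exponentially small in $2^a$, so with high probability no block of non-constant size ever qualifies and the array-part stays of size $\bigOh(1)$ throughout $t\le 2^j$.

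On this good event the rehashes with $t\le 2^j$ are exactly the doublings of an essentially pure hashmap that never holds more than $2^j$ keys, so their contribution to $C=\sum_i\beta_i$ telescopes geometrically to $\sum_{m\le j}2^{m+1}=\bigOh(2^j)=\bigOh(n)$, just as in the deletion-free analysis of Section~\ref{subsec:analysis-hashmaps-no-deletions}. The collapse rehash reinserts at most $n$ keys, all into the array-part, contributing $\bigOh(n)$ through the amortized $\bigOh(1)$ cost of a growing array (already used in Section~\ref{subsec:settings-analysis-array}) and $0$ to $C$, since the resulting hash-part is empty. Summing the two regimes yields a total of $\bigOh(n)$ calls with probability tending to $1$, hence $\bigOh(n g(n))$.

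The main obstacle --- and the reason $\mathbb{N}_b$ cannot simply be dropped --- is the borderline situation where the largest dyadic block sits right at the half-full threshold during the transition, which is what happens when $n$ is just above a power of two (e.g.\ $n=2^k+1$, for which $c\to1$ and the last pre-collapse rehash already occurs at load $\approx\tfrac12$). There the mean-to-threshold gap above degenerates, Hoeffding gives nothing, and one can no longer exclude that the array grows through several large sizes at successive rehashes, each paying a hash-part of size $\Theta(n)$; keeping $b$ bounded away from both $1$ and $2$ is exactly what forces a $\Theta(n)$ gap and kills this scenario. A residual delicacy is that ``the array is negligible'' must hold uniformly, not merely at one instant: monotonicity reduces this to the single time $t=n/2$, but a less structural, time-by-time argument would instead bound the expected number of anomalous (large-hash) rehashes by $\bigOh(1)$ and invoke Markov's inequality, so that this count is at most $g(n)$ with probability tending to $1$ for every $g\to\infty$, each such rehash costing at most $\beta_i\le 2n$ --- which is precisely why the conclusion is naturally phrased as $\bigOh(n g(n))$ rather than as a clean $\bigOh(n)$.
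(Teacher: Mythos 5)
Your argument follows the same route as the paper's own proof: control the array-part at a time where the load factor is bounded away from $\tfrac12$ (this is exactly where the hypothesis $n\in\mathbb{N}_b$ enters), deduce that the pre-collapse rehashes are ordinary hash-part doublings whose costs telescope to $\bigOh(n)$, and observe that the first rehash after time $2^j$ deterministically moves everything into the array-part $[1,2^{j+1}]$ and ends the process (these are the paper's Steps 1--3, resting on Corollary~\ref{cor:half-set-permutation-2nd-moment}, Remark~\ref{rem:rehashes-hash-increases-2} and Remark~\ref{rem:rehashes-hash-increases}). The genuine difference is that you push the array bound down to $\bigOh(1)$ instead of the paper's $g(n)$, so that only $\bigOh(1)$ array-increasing rehashes occur before the collapse and the total cost is $\bigOh(n)$ rather than the paper's $\bigOh(n\log g(n))\subseteq\bigOh(n\,g(n))$.

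Two caveats on that sharpening as written. First, $|S_t\cap[1,2^a]|$ is hypergeometric, not binomial, so Proposition~\ref{prop:hoeffding-ineq} does not apply to it as stated; the ``exponentially small in $2^a$'' tail needs the sampling-without-replacement version of Hoeffding's inequality (or the hypergeometric tail bounds of \cite{CHVATAL1979285} mentioned in the paper's remark). You do not actually need it: Lemma~\ref{lemma:half-set-permutation-2nd-moment} with a \emph{constant} threshold $2^{a_0}$ already gives, after the union bound over $a\ge a_0$, a failure probability $\bigOh(2^{-a_0})$, which makes the array $\bigOh(1)$ in the ``for every $\epsilon$ there is a constant $K_\epsilon$'' sense---enough for your conclusion, though not exponentially so. Second, the reduction ``it suffices to control it at the single time $t=n/2$'' is wrong as written: Lemma~\ref{lemma:half-set-permutation-2nd-moment} degenerates at $\theta=\tfrac12$, where the expected occupancy of every block sits exactly at the half-full threshold. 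The correct single control time---which you in fact identify in the following sentence---is the last pre-collapse rehash at $t\approx 2^{j-1}\le n/(2b)+o(n)$, whose load is bounded below $\tfrac12$ precisely because $b>1$; monotonicity of the occupancies then covers all earlier rehash times, and a short induction (as in the paper's Step 1) removes the apparent circularity between ``the array is small at rehash times'' and ``the rehash times are near powers of two''. With those two repairs your argument is sound and slightly sharper than the paper's.
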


Fixing $S\subseteq [n]$, let $S_t = S_t(\pi):= \{\pi(k) : k \leq t\}\cap S$, in other words, the set of keys from $S$ revealed up to time $t$. 
The proof of Theorem~\ref{thm:average-case-permutations2} is based on a key lemma and its corollary.
Consider a time $t \leq c n$ with $c<1/2$. We show in Lemma~\ref{lemma:half-set-permutation-2nd-moment} that, if $|S|$ is large, then time $t$ will not be enough
to have revealed half of the elements of $S$, i.e., $|S_t|>|S|/2$ is unlikely. Lemma~\ref{lemma:half-set-permutation-2nd-moment} and its corollary, Corollary~\ref{cor:half-set-permutation-2nd-moment}, apply to all $n\in \mathbb{N}$. In particular, the latter implies that array-part is not really used at any time $t\leq c n$.

\begin{lemma}
\label{lemma:half-set-permutation-2nd-moment}
Let $\theta = \tfrac{t}{n} < \tfrac{1}{2}$ and $s=|S|$, then
\begin{equation*}
\Pr\left(|S_t| > s/2\right) \leq \frac{\theta}{(\tfrac{1}{2}-\theta)^2} s^{-1}\,.
\end{equation*}
\end{lemma}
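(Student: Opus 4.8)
The plan is to run a second–moment (Chebyshev) argument on the random variable $|S_t|$, exploiting that it is a sum of negatively correlated indicators. First I would write $|S_t| = \sum_{x\in S} X_x$, where $X_x$ is the indicator of the event that the value $x$ occurs among the first $t$ revealed values $\pi(1),\dots,\pi(t)$. Equivalently, the set $\{\pi(1),\dots,\pi(t)\}$ is a uniform random $t$-subset of $[n]$, so each $X_x$ is Bernoulli with $\Pr(X_x=1)=t/n=\theta$, giving $\mathbb{E}[|S_t|]=s\theta$. Since $\theta<\tfrac12$ we have $s\theta<s/2$, so the event $|S_t|>s/2$ is a deviation of $|S_t|$ above its mean by at least $s(\tfrac12-\theta)>0$.

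Next I would apply Chebyshev's inequality:
\[
\Pr(|S_t|>s/2)\le \Pr\big(|S_t|-\mathbb{E}[|S_t|]\ge s(\tfrac12-\theta)\big)\le \frac{\operatorname{Var}(|S_t|)}{s^2(\tfrac12-\theta)^2}.
\]
The crux is to bound the variance. Because $|S_t|$ counts the marked elements in a draw of $t$ items without replacement from a population of size $n$ with $s$ marked, it follows a hypergeometric law; equivalently, the indicators $X_x$ are negatively correlated, since for $x\ne y$ one has $\operatorname{Cov}(X_x,X_y)=\tfrac{t(t-1)}{n(n-1)}-\theta^2<0$ (using $\tfrac{t-1}{n-1}<\tfrac{t}{n}$ for $t<n$). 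Hence the cross terms only decrease the variance and
\[
\operatorname{Var}(|S_t|)\le \sum_{x\in S}\operatorname{Var}(X_x)=s\,\theta(1-\theta)\le s\theta.
\]

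Substituting this bound into Chebyshev's inequality immediately yields
\[
\Pr(|S_t|>s/2)\le \frac{s\theta}{s^2(\tfrac12-\theta)^2}=\frac{\theta}{(\tfrac12-\theta)^2}\,s^{-1},
\]
which is exactly the claimed estimate. The only genuinely delicate point is the variance bound: one must verify that the pairwise covariances are non-positive (the sampling-without-replacement, i.e.\ hypergeometric, structure of $|S_t|$), so that discarding them gives a legitimate upper bound $\operatorname{Var}(|S_t|)\le s\theta(1-\theta)$. Everything else—the mean computation, the sign of $\tfrac12-\theta$, and the final algebra—is routine. An alternative to the covariance calculation is to invoke the closed form $\operatorname{Var}(|S_t|)=t\,\tfrac{s}{n}\,\tfrac{n-s}{n}\,\tfrac{n-t}{n-1}$ for the hypergeometric distribution and bound the last two factors by $1$, which yields the same estimate $\operatorname{Var}(|S_t|)\le s\theta$.
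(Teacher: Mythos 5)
Your proposal is correct and matches the paper's argument: both decompose $|S_t|$ into indicator variables, show $\mathbb{E}[|S_t|]=s\theta$ and $\operatorname{Var}(|S_t|)\leq s\theta$ by bounding the pairwise terms $\mathbb{E}[X_iX_j]=\tfrac{t(t-1)}{n(n-1)}\leq\theta^2$ (your negative-correlation observation is exactly this), and conclude by Chebyshev's inequality.
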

\begin{corollary}
    \label{cor:half-set-permutation-2nd-moment}
    Fix $c<1/2$ and let $g(n)\to\infty$. With probability tending to one, none of the sets $A_j=[1\ldots 2^j]$, for $2^j\geq g(n)$, is half-full at time $t \leq c n$.
\end{corollary}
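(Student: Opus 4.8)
The plan is to prove Lemma~\ref{lemma:half-set-permutation-2nd-moment} by the second moment method, computing the expectation and variance of $|S_t|$ and then applying Chebyshev's inequality. The key observation is that $|S_t|$ counts how many of the $s$ elements of $S$ land among the first $t$ positions of a uniform random permutation of $[n]$. Equivalently, $|S_t|$ follows a hypergeometric distribution: we are drawing $t$ positions out of $n$ without replacement and counting hits in the distinguished set $S$ of size $s$. Writing $|S_t| = \sum_{x \in S} \mathbf{1}[\pi^{-1}(x) \leq t]$ as a sum of indicator variables, each indicator has probability $t/n = \theta$ of being $1$, so by linearity $\Exp{|S_t|} = s\theta$.

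The next step is to bound the variance. For the hypergeometric distribution, $\Var{|S_t|} = s\,\theta(1-\theta)\frac{n-s}{n-1} \leq s\theta$, which I would obtain either by quoting the standard hypergeometric variance formula or by directly computing the covariances of the indicator pairs (these are negatively correlated, since sampling without replacement, so the variance is no larger than the binomial value $s\theta(1-\theta) \leq s\theta$). With $\Exp{|S_t|} = s\theta$ and $s\theta < s/2$ because $\theta < 1/2$, the event $\{|S_t| > s/2\}$ is a deviation of $|S_t|$ above its mean by at least $s/2 - s\theta = s(\tfrac12 - \theta)$. Chebyshev's inequality then gives
\begin{equation*}
\Pr\left(|S_t| > s/2\right) \leq \frac{\Var{|S_t|}}{\left(s(\tfrac12 - \theta)\right)^2} \leq \frac{s\theta}{s^2(\tfrac12-\theta)^2} = \frac{\theta}{(\tfrac12 - \theta)^2}\, s^{-1}\,,
\end{equation*}
which is exactly the claimed bound. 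The main subtlety here is getting the variance bound right, i.e., justifying that sampling without replacement does not increase the variance relative to the binomial case; the negative correlation argument makes this clean, so it is not really an obstacle, merely a point that must be stated carefully.

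For the corollary, I would apply the lemma to each set $A_j = [1\ldots 2^j]$ with $s = 2^j$ and $\theta = t/n \leq c < 1/2$, and then take a union bound over the relevant values of $j$. The set $A_j$ being ``half-full at time $t$'' means more than half of its $2^j$ elements have been revealed, i.e., $|(A_j)_t| > 2^j/2$, so the lemma bounds each such event by $\frac{c}{(\tfrac12 - c)^2}\,2^{-j}$. Summing over all $j$ with $2^j \geq g(n)$ gives a geometric series dominated by its first term, yielding a total probability of order $\frac{c}{(\tfrac12-c)^2}\cdot\frac{1}{g(n)}$, which tends to $0$ since $g(n) \to \infty$. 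Hence with probability tending to one none of these sets is half-full, and since a set $A_j$ can only serve as the array-part once it is more than half-full, the array-part remains trivial up to time $t \leq cn$. The only care needed is to use the monotonicity of $\theta \leq c$ to replace the $j$-dependent $\theta$ by the uniform constant $c$ before summing, which is immediate because $\theta/(\tfrac12-\theta)^2$ is increasing in $\theta$ on $(0,\tfrac12)$.
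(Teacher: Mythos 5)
Your argument for the corollary is correct and is essentially the same as the paper's: apply Lemma~\ref{lemma:half-set-permutation-2nd-moment} to each $A_j$ with $s=2^j$ and $\theta\leq c$, then union-bound the resulting geometric series $\sum_{2^j\geq g(n)} c(\tfrac12-c)^{-2}2^{-j} = \bigOh(1/g(n)) \to 0$. The only cosmetic difference is that you justify the variance bound for the lemma via negative correlation of the hypergeometric indicators rather than the paper's direct computation of $\mathbb{E}[X_iX_j]$, which changes nothing in the corollary itself.
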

\begin{proof}
    The probability that $A_j$ is more than half-full is at most $c \left(\tfrac{1}{2}-c\right)^{-2} 2^{-j}$ by Lemma~\ref{lemma:half-set-permutation-2nd-moment}. Thus, by the union bound, the probability that at least one of the $A_j$'s is more than half-full is bounded by
    $
    \sum_{j\geq \log_2 g(n)} c \left(\tfrac{1}{2}-c\right)^{-2} 2^{-j} = c \left(\tfrac{1}{2}-c\right)^{-2} 2^{1-\lceil \log_2 g(n)\rceil}
    $. As $g(n)\to\infty$, the bound tends to $0$.
\end{proof}

\begin{remark}
    One can directly obtain $\Pr\left(|S_k|=i\right)= \binom{|S|}{i} \cdot {\binom{n-|S|}{t-i}}/{\binom{n}{t}} $ using a simple counting argument. This is a hypergeometric distribution.
    More precise tail inequalities are known for this distribution \cite{CHVATAL1979285}. For our purposes, the bound in Lemma~\ref{lemma:half-set-permutation-2nd-moment} is enough, as we only consider  $S=[1,2^j]$ for $j\geq 1$. 
\end{remark}

\subsubsection{Proof of Lemma~\ref{lemma:half-set-permutation-2nd-moment}}
We first recall the notation we use for the proof.  Given a time $t\leq n$, let $\theta=\tfrac{t}{n}$. Fixing $S\subseteq [n]$, for a permutation $\pi$ of $\{1,\ldots,n\}$, we write $S_t = S_t(\pi):= \{\pi(k) : k \leq t\}\cap S$, $s=|S|$ and $s_t=s_t(\pi):=|S_t(\pi)|$

We start with the proof of Lemma~\ref{lemma:half-set-permutation-2nd-moment}. For the proof, we apply Chebyshev's inequality. Write
\[
s_t(\pi) = \sum_{i\in S} X_i^{\langle t\rangle}(\pi)
\]
where $X_i^{\langle t\rangle}(\pi)$ is $1$ if $i$ belongs to
$\{\pi(1),\ldots,\pi(t)\}$ and $0$ otherwise. For simplicity, we write $X_i=X_i^{\langle t\rangle}$ omitting the time $t$, as it will be fixed.

\begin{lemma}
    \label{lemma:expected-elements-subsets}
    For $A\subset \{1,\ldots,n\}$, let  $X_A = \prod_{i\in A} X_i$. Then
    $\mathbb{E}[X_A] = \binom{n-|A|}{t-|A|} / \binom{n}{t} $. Moreover, the inequality $\mathbb{E}[X_A] \leq (\tfrac{t}{n})^{|A|} $ holds.
\end{lemma}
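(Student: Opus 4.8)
The plan is to prove Lemma~\ref{lemma:expected-elements-subsets} by a direct counting argument, then derive the stated inequality by a telescoping bound. First I would compute $\mathbb{E}[X_A]$ exactly. Since $X_A = \prod_{i\in A} X_i$ is the indicator of the event that \emph{all} elements of $A$ lie in $\{\pi(1),\ldots,\pi(t)\}$, we have $\mathbb{E}[X_A] = \Pr\left(A \subseteq \{\pi(1),\ldots,\pi(t)\}\right)$. A uniform random permutation $\pi$ makes the \emph{set} $\{\pi(1),\ldots,\pi(t)\}$ a uniformly random $t$-subset of $\{1,\ldots,n\}$. Counting the $t$-subsets that contain the fixed set $A$ (there are $\binom{n-|A|}{t-|A|}$ of them, since after placing $A$ we choose the remaining $t-|A|$ positions from the $n-|A|$ elements outside $A$) and dividing by the total $\binom{n}{t}$ gives the exact formula.

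Next I would establish the inequality $\mathbb{E}[X_A] \leq (t/n)^{|A|}$. Writing $a = |A|$ and expanding the ratio of binomial coefficients, I would use
\[
\frac{\binom{n-a}{t-a}}{\binom{n}{t}} = \frac{t!\,(n-a)!}{(t-a)!\,n!} = \prod_{j=0}^{a-1} \frac{t-j}{n-j}.
\]
Then the key observation is that each factor satisfies $\frac{t-j}{n-j} \leq \frac{t}{n}$ for $0 \leq j \leq a-1$, because for $t \leq n$ the map $x \mapsto \frac{t-x}{n-x}$ is non-increasing in $x$ (equivalently, cross-multiplying, $(t-j)n \leq t(n-j)$ reduces to $-jn \leq -jt$, i.e.\ $jt \leq jn$, which holds). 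Multiplying these $a$ factors yields the bound $(t/n)^{a} = (t/n)^{|A|}$, as claimed.

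The argument is essentially routine; the only point requiring care is the edge cases. If $t < |A|$ then $A$ cannot fit inside $\{\pi(1),\ldots,\pi(t)\}$, so $X_A = 0$ deterministically and both the exact formula (via the convention $\binom{m}{k}=0$ for $k<0$) and the inequality hold trivially. I expect no genuine obstacle here; the main thing to get right is the clean telescoping product and the per-factor monotonicity inequality, which is what makes the proof transparent rather than requiring a separate induction on $|A|$.
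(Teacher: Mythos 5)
Your proof is correct and follows essentially the same route as the paper: the exact formula comes from counting the $t$-subsets of $[n]$ containing $A$, and the inequality from writing the ratio of binomials as the product $\prod_{j=0}^{|A|-1}\tfrac{t-j}{n-j}$ and bounding each factor by $\tfrac{t}{n}$. Your explicit treatment of the edge case $|A|>t$ matches the paper's remark that the inequality is then trivial.
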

\begin{proof}
    The variable $X_A$ is $1$ if $S\subseteq \{\pi(1),\ldots,\pi(t)\}$ and $0$ otherwise. Disregarding the order of $\pi(1),\ldots \pi(t)$, as it does not affect the sets, the number of sets of size $t$ having this property is $\binom{n-|A|}{t-|A|}$ as we must enforce that the elements of $A$ are present. Hence, the first equality.

    For the inequality, we note that if $|A|\leq t$,
    $
    \binom{n-|A|}{t-|A|} / \binom{n}{t} = \frac{t \cdots (t-|A|+1)}{n\cdots (n-|A|+1)}\,,
    $
    and $\tfrac{t-i}{n-i}\leq \tfrac{t}{n}$ for $i < n $. If $|A| > t$ the inequality is trivial.
\end{proof}

By symmetry and linearity of the expectation,  $\mathbb{E}[s_t] = s \times \mathbb{E}[X_1] = s \times \binom{n-1}{t-1} / \binom{n}{t} = s \times \theta$, yielding the expected value of $s_t$.

For the second moment of $s_t$, we observe that classically
\[\mathbb{E}[s_t^2] = \sum_{i\in S} \mathbb{E}[X_i^2] + 2  \sum_{i,j\in S : i<j} \mathbb{E}[X_i X_j] =  \mathbb{E}[|S_t|]+ 2  \sum_{i,j\in S : i<j}  \mathbb{E}[X_i X_j]\,. \]
By Lemma~\ref{lemma:expected-elements-subsets}, $\mathbb{E}[X_i X_j] = \binom{n-2}{t-2} / \binom{n}{t} \leq \theta^2$ where $\theta=t/n$. Thus, by symmetry again and ${\tt Var}(s_t)=\mathbb{E}[s_t^2] - \mathbb{E}[s_t]^2 $, we obtain
\[\mathbb{E}[s_t^2] \leq s\theta + s^2 \theta^2 \,,\qquad {\tt Var}(s_t)  \leq  s\theta\,. \]

Hence, by Chebyshev's inequality, we have proved Lemma~\ref{lemma:half-set-permutation-2nd-moment}.

\subsubsection{Proof of Theorem~\ref{thm:average-case-permutations2}}
We are now ready to prove the theorem. Recall that when we say at time $t$, we mean after the $t$-th insertion. The notation $a(n)\ll b(n)$ and $b(n)\gg a(n)$ means $a(n)<b(n)$ for all large enough $n$.

The following simple remark will be important in the proof. 
\begin{remark}
    \label{rem:rehashes-hash-increases-2}
    Suppose we only insert elements into an initially empty hybrid table. If, after the $t_0$-th insertion, the hash-part has capacity $M=2^h$ and the array-part $A=2^a$, then $t_0 \leq M+A$ and
    the next rehash is triggered by an insertion between $t=M+1$ and $t=M+A+1$ (inclusive).
\end{remark}
\begin{proof}[Proof of Theorem~\ref{thm:average-case-permutations2}]

Write $2^k b < n \leq 2^{k+1}$, where $b\in (1,2)$.

The proof consists of three steps:
\begin{enumerate}
    \item 
    By Corollary~\ref{cor:half-set-permutation-2nd-moment}, at time $t= 2g(n) + 2^{k-1} $ the array-part has size less than $g(n)$, with probability tending to $1$. Indeed, if $b^\prime\in(1,b)$, since $g(n)=o(n)$ we have  $t= 2g(n) + 2^{k-1} \ll n/(2b^\prime)$, and we apply Corollary~\ref{cor:half-set-permutation-2nd-moment} with $c=1/(2b^\prime)$. Thus, the hash-part must have size $\M=2^{k}$, because $2^{k-1} + g(n)\ll t$ and so $\M\leq 2^{k-1}$ is impossible by Remark~\ref{rem:rehashes-hash-increases-2}, while $\M=2^{k+1}$ would be absurd as $\M / 2 = 2^k \gg t$, by our hypothesis $g(n)=o(n)$, and the hash-part would not be half-full.
    \item By Remark~\ref{rem:rehashes-hash-increases-2}, the next rehash occurs at $t > 2^k$. Then the array-part takes up the whole range $[1,2^{k+1}]$. Thus, the total number of rehashes in which the array-part can increase is at most $2+\log_2 g(n)=\bigOh(g(n))$.
    \item Then we conclude by Remark~\ref{rem:rehashes-hash-increases}: the rehashes in which the array-part increases contribute at most $\bigOh(n\,g(n))$ to the cost $C$, while the rest of the rehashes contribute only $\bigOh(n)$. \qedhere
\end{enumerate}

 \end{proof}

\section{Conclusions and final remarks}
\label{sec:conclusions}

The only data-structuring mechanism in \lua are tables, so it is of the upmost importance that they are extremely efficient in time and space usage. \Lua hybrid tables work very well in many practical use-cases, for example, to create an array of $n$ elements filling $A[1]$, \ldots, $A[n]$ sequentially, or to build a dictionary in which we alternate insertions and searches but have no deletions, or when we fill a table and then process and remove one by one its elements. But there are some situations which may also arise in practice rather naturally where there are noticeable inefficiencies or suboptimal performance of the \lua hybrid tables, as our theoretical analysis has revealed. Fig.\ref{fig:experiments-time} illustrates that this unwanted behavior shows in practice on our simulations.

\begin{figure}[ht]
    \centering
    \begin{tikzpicture}[scale=0.6]
  \begin{axis}[
   xlabel={number of operations $T$},
    ylabel={${\tt total\ time\ (seconds)}$},every y tick scale label/.append style={xshift=-1.6em},every x tick scale label/.append style={xshift=2.8em, yshift=1.4em},
    ]
    
        \addplot  [blue]  coordinates {
(100000, 0.062974)
(200000, 0.142318)
(300000, 0.189110)
(400000, 0.337570)
(500000, 0.386275)
(600000, 0.441415)
(700000, 0.747508)
(800000, 0.799991)
(900000, 0.853294)
(1000000, 0.907695)
(1100000, 0.963363)
(1200000, 1.026785)
(1300000, 1.187914)
(1400000, 1.740708)
(1500000, 1.794744)
(1600000, 1.849799)
(1700000, 1.905969)
(1800000, 1.962294)
(1900000, 2.019209)
(2000000, 2.077247)
(2100000, 2.136185)
(2200000, 2.195670)
(2300000, 2.256305)
(2400000, 2.338443)
(2500000, 2.400596)
(2600000, 2.691898)
(2700000, 3.920978)
(2800000, 3.975217)
(2900000, 4.030007)
(3000000, 4.085194)
(3100000, 4.140767)
(3200000, 4.196560)
(3300000, 4.252615)
(3400000, 4.309260)
(3500000, 4.366320)
(3600000, 4.423437)
(3700000, 4.481256)
(3800000, 4.539618)
(3900000, 4.598445)
(4000000, 4.657263)
(4100000, 4.716530)
(4200000, 4.776273)
(4300000, 4.836781)
(4400000, 4.897618)
(4500000, 4.958849)
(4600000, 5.020686)
(4700000, 5.108101)
(4800000, 5.171053)
(4900000, 5.233760)
(5000000, 5.296874)
(5100000, 5.827059)
(5200000, 5.891031)
(5300000, 8.531626)
(5400000, 8.586637)
(5500000, 8.641549)
(5600000, 8.696772)
(5700000, 8.752620)
(5800000, 8.808608)
(5900000, 8.864608)
(6000000, 8.920687)
(6100000, 8.977162)
(6200000, 9.034081)
(6300000, 9.091124)
(6400000, 9.147789)
(6500000, 9.205484)
(6600000, 9.263010)
(6700000, 9.320362)
(6800000, 9.377855)
(6900000, 9.435415)
(7000000, 9.492939)
(7100000, 9.550550)
(7200000, 9.608197)
(7300000, 9.665909)
(7400000, 9.723404)
(7500000, 9.781381)
(7600000, 9.839653)
(7700000, 9.898606)
(7800000, 9.956998)
(7900000, 10.015817)
(8000000, 10.074632)
(8100000, 10.133639)
(8200000, 10.192783)
(8300000, 10.252104)

     };

        \addplot  [blue,dashed]  coordinates {
(100000, 0.039734)
(200000, 0.086341)
(300000, 0.126802)
(400000, 0.193323)
(500000, 0.236783)
(600000, 0.288747)
(700000, 0.390563)
(800000, 0.440078)
(900000, 0.490559)
(1000000, 0.541962)
(1100000, 0.594094)
(1200000, 0.658843)
(1300000, 0.827137)
(1400000, 0.878910)
(1500000, 0.931342)
(1600000, 0.984409)
(1700000, 1.038195)
(1800000, 1.092628)
(1900000, 1.147890)
(2000000, 1.203830)
(2100000, 1.260585)
(2200000, 1.318187)
(2300000, 1.376616)
(2400000, 1.448492)
(2500000, 1.508509)
(2600000, 1.792853)
(2700000, 1.845606)
(2800000, 1.898647)
(2900000, 1.952052)
(3000000, 2.005897)
(3100000, 2.059978)
(3200000, 2.114371)
(3300000, 2.169176)
(3400000, 2.224335)
(3500000, 2.279853)
(3600000, 2.335679)
(3700000, 2.391904)
(3800000, 2.448468)
(3900000, 2.505403)
(4000000, 2.562718)
(4100000, 2.620461)
(4200000, 2.678574)
(4300000, 2.737015)
(4400000, 2.795843)
(4500000, 2.855161)
(4600000, 2.914810)
(4700000, 2.997370)
(4800000, 3.057797)
(4900000, 3.118621)
(5000000, 3.179437)
(5100000, 3.697339)
(5200000, 3.749859)
(5300000, 3.802529)
(5400000, 3.855368)
(5500000, 3.908362)
(5600000, 3.961501)
(5700000, 4.014814)
(5800000, 4.068293)
(5900000, 4.121869)
(6000000, 4.175579)
(6100000, 4.229425)
(6200000, 4.283397)
(6300000, 4.337456)
(6400000, 4.391628)
(6500000, 4.445922)
(6600000, 4.500264)
(6700000, 4.554745)
(6800000, 4.609612)
(6900000, 4.664484)
(7000000, 4.719385)
(7100000, 4.774273)
(7200000, 4.829312)
(7300000, 4.884382)
(7400000, 4.939512)
(7500000, 4.994753)
(7600000, 5.049964)
(7700000, 5.105017)
(7800000, 5.160255)
(7900000, 5.215294)
(8000000, 5.270476)
(8100000, 5.325560)
(8200000, 5.380704)
(8300000, 5.435880)
};

             \addplot  [red!70!white]  coordinates {
(100000, 0.051785)
(200000, 0.111401)
(300000, 0.211995)
(400000, 0.253456)
(500000, 0.312290)
(600000, 0.499629)
(700000, 0.547309)
(800000, 0.593067)
(900000, 0.640544)
(1000000, 0.729006)
(1100000, 1.123743)
(1200000, 1.172641)
(1300000, 1.222305)
(1400000, 1.280520)
(1500000, 1.331907)
(1600000, 1.383518)
(1700000, 1.435972)
(1800000, 1.489104)
(1900000, 1.632236)
(2000000, 1.686417)
(2100000, 2.468658)
(2200000, 2.653856)
(2300000, 2.704869)
(2400000, 2.756446)
(2500000, 2.808578)
(2600000, 2.861103)
(2700000, 2.914069)
(2800000, 2.986120)
(2900000, 3.040546)
(3000000, 3.095438)
(3100000, 3.150563)
(3200000, 3.206139)
(3300000, 3.262699)
(3400000, 3.319591)
(3500000, 3.377583)
(3600000, 3.435946)
(3700000, 3.692681)
(3800000, 3.750723)
(3900000, 3.809358)
(4000000, 3.868688)
(4100000, 4.136277)
(4200000, 5.562277)
(4300000, 6.129023)
(4400000, 6.181224)
(4500000, 6.233648)
(4600000, 6.286186)
(4700000, 6.338959)
(4800000, 6.391711)
(4900000, 6.444780)
(5000000, 6.498154)
(5100000, 6.551745)
(5200000, 6.605329)
(5300000, 6.659265)
(5400000, 6.713333)
(5500000, 6.767760)
(5600000, 6.846858)
(5700000, 6.902120)
(5800000, 6.957859)
(5900000, 7.013776)
(6000000, 7.070054)
(6100000, 7.126778)
(6200000, 7.183415)
(6300000, 7.240264)
(6400000, 7.297589)
(6500000, 7.354932)
(6600000, 7.412562)
(6700000, 7.470267)
(6800000, 7.528494)
(6900000, 7.586742)
(7000000, 7.645175)
(7100000, 7.703846)
(7200000, 7.762939)
(7300000, 8.230969)
(7400000, 8.289634)
(7500000, 8.348605)
(7600000, 8.407774)
(7700000, 8.467237)
(7800000, 8.526916)
(7900000, 8.587014)
(8000000, 8.646993)
(8100000, 9.141632)
(8200000, 9.202003)
(8300000, 9.686099)
};

             \addplot  [red!70!white,dashed]  coordinates {
(100000, 0.034741)
(200000, 0.072709)
(300000, 0.118555)
(400000, 0.158864)
(500000, 0.222191)
(600000, 0.263816)
(700000, 0.311491)
(800000, 0.356544)
(900000, 0.452132)
(1000000, 0.499755)
(1100000, 0.548173)
(1200000, 0.596981)
(1300000, 0.646701)
(1400000, 0.707359)
(1500000, 0.758425)
(1600000, 0.809910)
(1700000, 0.861638)
(1800000, 1.020947)
(1900000, 1.071174)
(2000000, 1.121833)
(2100000, 1.172944)
(2200000, 1.224443)
(2300000, 1.276432)
(2400000, 1.328886)
(2500000, 1.381729)
(2600000, 1.434843)
(2700000, 1.488354)
(2800000, 1.556831)
(2900000, 1.611361)
(3000000, 1.666316)
(3100000, 1.721844)
(3200000, 1.777886)
(3300000, 1.834454)
(3400000, 1.891575)
(3500000, 1.949282)
(3600000, 2.221103)
(3700000, 2.272199)
(3800000, 2.323458)
(3900000, 2.375030)
(4000000, 2.426767)
(4100000, 2.478827)
(4200000, 2.531141)
(4300000, 2.583579)
(4400000, 2.636303)
(4500000, 2.689196)
(4600000, 2.742271)
(4700000, 2.795591)
(4800000, 2.849168)
(4900000, 2.903011)
(5000000, 2.957073)
(5100000, 3.011333)
(5200000, 3.065807)
(5300000, 3.120484)
(5400000, 3.175272)
(5500000, 3.230339)
(5600000, 3.309795)
(5700000, 3.365614)
(5800000, 3.421539)
(5900000, 3.477645)
(6000000, 3.533945)
(6100000, 3.590522)
(6200000, 3.647404)
(6300000, 3.704789)
(6400000, 3.762259)
(6500000, 3.820094)
(6600000, 3.878113)
(6700000, 3.936250)
(6800000, 3.994887)
(6900000, 4.053705)
(7000000, 4.112646)
(7100000, 4.607434)
(7200000, 4.658836)
(7300000, 4.710335)
(7400000, 4.762010)
(7500000, 4.813857)
(7600000, 4.865918)
(7700000, 4.918065)
(7800000, 4.970463)
(7900000, 5.023340)
(8000000, 5.075890)
(8100000, 5.128359)
(8200000, 5.180882)
(8300000, 5.233634)
};

\end{axis}
\end{tikzpicture} \qquad
             \begin{tikzpicture}[scale=0.6]
  \begin{axis}[
  ymin=.5,
  xlabel={number of operations (log scale)},
    ylabel={amortized time per operation ($\mu{\tt s}/{\tt op}$)},every y tick scale label/.append style={xshift=-1.6em},every x tick scale label/.append style={xshift=2.8em, yshift=1.4em}, 
       xmode=log,
   log basis x={2},
]
        \addplot  [blue]  coordinates {

(100000.0,0.6297400000000001)
(200000.0,0.7115900000000001)
(300000.0,0.6303666666666667)
(400000.0,0.843925)
(500000.0,0.77255)
(600000.0,0.7356916666666667)
(700000.0,1.0678685714285714)
(800000.0,0.9999887500000001)
(900000.0,0.9481044444444445)
(1000000.0,0.907695)
(1100000.0,0.8757845454545455)
(1200000.0,0.8556541666666667)
(1300000.0,0.9137799999999999)
(1400000.0,1.2433628571428572)
(1500000.0,1.196496)
(1600000.0,1.156124375)
(1700000.0,1.1211582352941176)
(1800000.0,1.0901633333333334)
(1900000.0,1.0627415789473684)
(2000000.0,1.0386235)
(2100000.0,1.0172309523809522)
(2200000.0,0.9980318181818182)
(2300000.0,0.9810021739130435)
(2400000.0,0.97435125)
(2500000.0,0.9602384)
(2600000.0,1.0353453846153846)
(2700000.0,1.452214074074074)
(2800000.0,1.4197203571428572)
(2900000.0,1.3896575862068967)
(3000000.0,1.3617313333333334)
(3100000.0,1.3357312903225809)
(3200000.0,1.311425)
(3300000.0,1.288671212121212)
(3400000.0,1.267429411764706)
(3500000.0,1.24752)
(3600000.0,1.2287325)
(3700000.0,1.2111502702702703)
(3800000.0,1.1946363157894737)
(3900000.0,1.1790884615384616)
(4000000.0,1.16431575)
(4100000.0,1.1503731707317073)
(4200000.0,1.137207857142857)
(4300000.0,1.1248327906976745)
(4400000.0,1.113095)
(4500000.0,1.1019664444444444)
(4600000.0,1.0914534782608698)
(4700000.0,1.08683)
(4800000.0,1.0773027083333333)
(4900000.0,1.0681142857142858)
(5000000.0,1.0593748)
(5100000.0,1.1425605882352943)
(5200000.0,1.1328905769230768)
(5300000.0,1.609740754716981)
(5400000.0,1.5901179629629631)
(5500000.0,1.5711907272727272)
(5600000.0,1.552995)
(5700000.0,1.5355473684210525)
(5800000.0,1.5187255172413794)
(5900000.0,1.50247593220339)
(6000000.0,1.4867811666666666)
(6100000.0,1.4716659016393443)
(6200000.0,1.4571098387096777)
(6300000.0,1.4430355555555556)
(6400000.0,1.42934203125)
(6500000.0,1.4162283076923077)
(6600000.0,1.4034863636363637)
(6700000.0,1.3910988059701492)
(6800000.0,1.3790963235294118)
(6900000.0,1.3674514492753624)
(7000000.0,1.3561341428571427)
(7100000.0,1.3451478873239437)
(7200000.0,1.3344718055555558)
(7300000.0,1.3240971232876713)
(7400000.0,1.3139735135135138)
(7500000.0,1.3041841333333333)
(7600000.0,1.2946911842105264)
(7700000.0,1.2855332467532468)
(7800000.0,1.2765382051282053)
(7900000.0,1.2678249367088608)
(8000000.0,1.259329)
(8100000.0,1.2510665432098766)
(8200000.0,1.243022317073171)
(8300000.0,1.2351932530120482)
     } node[pos=0] (endofplot90) {};

\addplot [blue,dashed] coordinates {
(100000,0.39734)
(200000,0.431705)
(300000,0.42267333333333335)
(400000,0.4833075)
(500000,0.473566)
(600000,0.481245)
(700000,0.5579471428571429)
(800000,0.5500975)
(900000,0.5450655555555556)
(1000000,0.541962)
(1100000,0.5400854545454545)
(1200000,0.5490358333333333)
(1300000,0.6362592307692307)
(1400000,0.6277928571428572)
(1500000,0.6208946666666667)
(1600000,0.615255625)
(1700000,0.6107029411764706)
(1800000,0.6070155555555555)
(1900000,0.6041526315789474)
(2000000,0.601915)
(2100000,0.6002785714285714)
(2200000,0.5991759090909091)
(2300000,0.5985286956521739)
(2400000,0.6035383333333333)
(2500000,0.6034036)
(2600000,0.6895588461538461)
(2700000,0.6835577777777778)
(2800000,0.6780882142857143)
(2900000,0.6731213793103449)
(3000000,0.6686323333333334)
(3100000,0.6645090322580646)
(3200000,0.6607409375)
(3300000,0.6573260606060606)
(3400000,0.6542161764705883)
(3500000,0.6513865714285715)
(3600000,0.6487997222222223)
(3700000,0.6464605405405406)
(3800000,0.6443336842105263)
(3900000,0.6424110256410256)
(4000000,0.6406795)
(4100000,0.6391368292682926)
(4200000,0.6377557142857143)
(4300000,0.6365151162790698)
(4400000,0.6354188636363637)
(4500000,0.6344802222222222)
(4600000,0.633654347826087)
(4700000,0.6377382978723404)
(4800000,0.6370410416666666)
(4900000,0.6364532653061225)
(5000000,0.6358874)
(5100000,0.7249684313725491)
(5200000,0.7211267307692307)
(5300000,0.7174583018867925)
(5400000,0.7139570370370371)
(5500000,0.7106112727272728)
(5600000,0.7074108928571429)
(5700000,0.7043533333333334)
(5800000,0.7014298275862069)
(5900000,0.6986218644067796)
(6000000,0.6959298333333334)
(6100000,0.6933483606557377)
(6200000,0.6908704838709677)
(6300000,0.6884850793650794)
(6400000,0.686191875)
(6500000,0.683988)
(6600000,0.6818581818181818)
(6700000,0.6798126865671642)
(6800000,0.6778841176470588)
(6900000,0.6760121739130435)
(7000000,0.6741978571428572)
(7100000,0.6724328169014084)
(7200000,0.6707377777777778)
(7300000,0.6690934246575343)
(7400000,0.6675016216216216)
(7500000,0.6659670666666667)
(7600000,0.6644689473684211)
(7700000,0.6629892207792207)
(7800000,0.6615711538461538)
(7900000,0.6601637974683544)
(8000000,0.6588095)
(8100000,0.6574765432098766)
(8200000,0.6561834146341463)
(8300000,0.6549253012048193)
};
     
             \addplot  [red!70!white]  coordinates {
(100000.0,0.51785)
(200000.0,0.5570050000000001)
(300000.0,0.70665)
(400000.0,0.6336400000000001)
(500000.0,0.62458)
(600000.0,0.832715)
(700000.0,0.7818700000000001)
(800000.0,0.7413337500000001)
(900000.0,0.7117155555555557)
(1000000.0,0.729006)
(1100000.0,1.0215845454545456)
(1200000.0,0.9772008333333334)
(1300000.0,0.9402346153846153)
(1400000.0,0.914657142857143)
(1500000.0,0.887938)
(1600000.0,0.8646987500000001)
(1700000.0,0.8446894117647059)
(1800000.0,0.8272800000000001)
(1900000.0,0.8590715789473685)
(2000000.0,0.8432085)
(2100000.0,1.1755514285714286)
(2200000.0,1.2062981818181822)
(2300000.0,1.1760300000000001)
(2400000.0,1.1485191666666668)
(2500000.0,1.1234312)
(2600000.0,1.1004242307692307)
(2700000.0,1.0792848148148149)
(2800000.0,1.0664714285714287)
(2900000.0,1.0484641379310344)
(3000000.0,1.0318126666666667)
(3100000.0,1.0163106451612904)
(3200000.0,1.0019184375)
(3300000.0,0.9886966666666668)
(3400000.0,0.9763502941176471)
(3500000.0,0.9650237142857143)
(3600000.0,0.9544294444444446)
(3700000.0,0.9980218918918919)
(3800000.0,0.9870323684210526)
(3900000.0,0.9767584615384616)
(4000000.0,0.967172)
(4100000.0,1.0088480487804878)
(4200000.0,1.3243516666666666)
(4300000.0,1.4253541860465118)
(4400000.0,1.4048236363636366)
(4500000.0,1.385255111111111)
(4600000.0,1.3665621739130436)
(4700000.0,1.3487146808510637)
(4800000.0,1.3316064583333334)
(4900000.0,1.315261224489796)
(5000000.0,1.2996308)
(5100000.0,1.2846558823529413)
(5200000.0,1.270255576923077)
(5300000.0,1.2564650943396227)
(5400000.0,1.243209814814815)
(5500000.0,1.2305018181818181)
(5600000.0,1.2226532142857143)
(5700000.0,1.2108982456140351)
(5800000.0,1.1996308620689655)
(5900000.0,1.188775593220339)
(6000000.0,1.1783423333333334)
(6100000.0,1.168324262295082)
(6200000.0,1.1586153225806453)
(6300000.0,1.149248253968254)
(6400000.0,1.14024828125)
(6500000.0,1.1315279999999999)
(6600000.0,1.1231154545454547)
(6700000.0,1.1149652238805972)
(6800000.0,1.1071314705882354)
(6900000.0,1.0995278260869565)
(7000000.0,1.0921678571428572)
(7100000.0,1.0850487323943663)
(7200000.0,1.0781859722222225)
(7300000.0,1.12753)
(7400000.0,1.120220810810811)
(7500000.0,1.1131473333333333)
(7600000.0,1.106286052631579)
(7700000.0,1.0996411688311691)
(7800000.0,1.093194358974359)
(7900000.0,1.0869637974683546)
(8000000.0,1.080874125)
(8100000.0,1.1285965432098766)
(8200000.0,1.122195487804878)
(8300000.0,1.1669998795180725)
} node[pos=0] (endofplot75) {};

\addplot [red!70!white,dashed] coordinates {
(100000,0.34741)
(200000,0.363545)
(300000,0.39518333333333333)
(400000,0.39716)
(500000,0.444382)
(600000,0.4396933333333333)
(700000,0.44498714285714286)
(800000,0.44568)
(900000,0.5023688888888889)
(1000000,0.499755)
(1100000,0.4983390909090909)
(1200000,0.49748416666666667)
(1300000,0.4974623076923077)
(1400000,0.5052564285714286)
(1500000,0.5056166666666667)
(1600000,0.50619375)
(1700000,0.5068458823529411)
(1800000,0.5671927777777778)
(1900000,0.5637757894736842)
(2000000,0.5609165)
(2100000,0.5585447619047619)
(2200000,0.556565)
(2300000,0.5549704347826087)
(2400000,0.5537025)
(2500000,0.5526916)
(2600000,0.5518626923076924)
(2700000,0.5512422222222222)
(2800000,0.5560110714285714)
(2900000,0.555641724137931)
(3000000,0.5554386666666666)
(3100000,0.5554335483870968)
(3200000,0.555589375)
(3300000,0.5558951515151516)
(3400000,0.5563455882352941)
(3500000,0.5569377142857143)
(3600000,0.6169730555555556)
(3700000,0.6141078378378378)
(3800000,0.6114363157894737)
(3900000,0.6089820512820513)
(4000000,0.60669175)
(4100000,0.6045919512195121)
(4200000,0.602652619047619)
(4300000,0.6008323255813953)
(4400000,0.5991597727272727)
(4500000,0.5975991111111111)
(4600000,0.5961458695652174)
(4700000,0.5948065957446809)
(4800000,0.5935766666666666)
(4900000,0.5924512244897959)
(5000000,0.5914146)
(5100000,0.5904574509803922)
(5200000,0.5895782692307693)
(5300000,0.5887705660377358)
(5400000,0.5880133333333334)
(5500000,0.5873343636363636)
(5600000,0.5910348214285714)
(5700000,0.5904585964912281)
(5800000,0.5899205172413793)
(5900000,0.5894313559322034)
(6000000,0.5889908333333334)
(6100000,0.5886101639344262)
(6200000,0.5882909677419355)
(6300000,0.588061746031746)
(6400000,0.58785296875)
(6500000,0.5877067692307693)
(6600000,0.5875928787878788)
(6700000,0.5875)
(6800000,0.5874833823529412)
(6900000,0.5874934782608696)
(7000000,0.5875208571428572)
(7100000,0.6489343661971831)
(7200000,0.6470605555555555)
(7300000,0.6452513698630137)
(7400000,0.6435148648648649)
(7500000,0.6418476)
(7600000,0.6402523684210526)
(7700000,0.6387097402597403)
(7800000,0.6372388461538462)
(7900000,0.6358658227848101)
(8000000,0.63448625)
(8100000,0.6331307407407407)
(8200000,0.6318148780487804)
(8300000,0.630558313253012)
};

\end{axis}
\end{tikzpicture}
\caption{Experimental plots for time against number of operations. The plot on the right shows the average microseconds/operation made on a personal computer. Blue plots correspond to a probability of insertion $p=0.9$, while red plots correspond to $p=0.75$. The solid plots correspond to the original \lua code, while the dashed plots correspond to the modified \Lua, in order to ensure free room when rehashing as described below (our fix proposal). The \Lua code for the experiments can be found at \url{https://gist.github.com/PRotondo/59036a5a3ddd53b30d9555a3a748cb7c}.}
    \label{fig:experiments-time}
    \vspace{-.5cm}
\end{figure}

In Section~\ref{sec:hybrid-tables}, we
have shown that the hybrid structure introduces similar issues  (see Prop.~\ref{worst-case-permutations-1}) when
considering only insertions. The effect is more limited than in the case
of both insertions and deletions (see Prop.~\ref{worst-case-permutations-2} and Thm.~\ref{thm:average-case-permutations2}), yet the array-part
might not be exploited to reduce memory consumption as much as would be expected.

These problems seem to have easy fixes, the most immediate one being to allow more room when rehashing, to avoid restarting with a new full or almost full table. A second solution would be to implement true deletions instead of just marking the deleted elements by setting their values to \nil. Both solutions are very classical and details can be found, for instance, in~\cite{Knuth3}.

We conclude by briefly considering the first fix.  
In practice, it is enough to change just one line of code. The function {\tt setnodesize} in {\tt ltable.c} creates a new hash-part having at least {\tt size}  elements, which is passed on as an argument. There, the new exponent $m$ of the size of the hash-part, called {\tt lsize} in the {\tt C} code, is chosen to be $\lceil\log_2{\tt size}\rceil$, in {\tt C} code  {\tt luaO\_ceillog2(size)}. To enforce extra space, it suffices to set {\tt lsize} to {\tt luaO\_ceillog2(size+(size>>2))}. Increasing the capacity of the new hash-part this way ensures that at least 20\% of its cells are free\footnote{More precisely, as {\tt size>>2} corresponds to $\lfloor {\tt size}/4\rfloor$ rather than $\lceil {\tt size}/4\rceil$, the number of free cells is guaranteed to be at least 20\% of the total, minus one extra cell due to the floor function.} after reinserting the old elements. The effects of this new rehash policy can be seen in Figure~\ref{fig:experiments-time}. This fix guarantees amortized constant time per operation: indeed,  between two consecutive rehashes we must perform at least $M/5$ insertions\footnote{Not $M/5-1$ because we need to perform one extra insertion to trigger the rehash.}. 

\medskip

\begin{wrapfigure}{l}{0.37\textwidth}
    \vspace{-.7cm} 
    \centering
    \lstinputlisting{worst-case-insertions.lua}
    \caption{\Lua code for the worst-case (insertions) in Example~\ref{example:hybrid-1}.}
    \label{code:hybrid-1}
\end{wrapfigure}

We also considered the effect of this modification on the worst-case scenarios described previously.
For the situation in Example~\ref{worst-case-hash} with $M=2^{15}$,
our simulations in a personal computer\footnote{Processor AMD Ryzen 5 PRO 2500U, 16 GB RAM, running LinuxMint. The \Lua code for the tests can be found in \url{https://gist.github.com/PRotondo/275a1292cd0b4cc08064211f2e600dc1}} yield a time of approximately 32 seconds for the original \Lua, while the modified Lua takes just 16 milliseconds.
Considering only insertions, \Lua takes 21 seconds for Example~\ref{example:hybrid-1} with $M=2^{24}$, 
whose five-line code can be seen in Figure~\ref{code:hybrid-1} on the left, against 3.5 seconds for the modified version.

\clearpage
\bibliography{lua-hashmaps}

\end{document}